\title{Lower bounds for 2-query LCCs over large alphabet}
\newcounter{mynotes}
\renewcommand{\qedsymbol}{$\square$}
\declaretheorem[within=section]{theorem}
\declaretheorem[sibling=theorem]{lemma}
\declaretheorem[sibling=theorem]{claim}
\declaretheorem[sibling=theorem]{definition}
\newcounter{termcounter}
\renewcommand{\thetermcounter}{\Alph{termcounter}}
\crefname{term}{term}{terms}
\def\term{\@ifnextchar[\term@optarg\term@noarg}%]
\def\term@optarg[#1]#2{%
  \textup{(#1)}%
  \def\@currentlabel{#1}%
  \def\cref@currentlabel{[][2147483647][]#1}%
  \cref@label[term]{#2}}
\def\term@noarg#1{%
  \refstepcounter{termcounter}%
  \textup{(\thetermcounter)}%
  \cref@label[term]{#1}}
\newcommand{\ignore}[1]{}
\newcommand{\bits}{\{0,1\}}
\newcommand{\BC}{\bits}
\newcommand{\supp}{\mathrm{supp}}
\newcommand{\poly}{\mathrm{poly}}
\newcommand{\vc}{\mathrm{vc}}
\newcommand{\norm}[1]{\lVert#1\rVert}
\newcommand{\outdeg}{\mathrm{deg}^+}
\newcommand{\indeg}{\mathrm{deg}^-}
\newcommand{\vecarrow}[1]{\overrightarrow{#1}}
\definecolor{DSred}{rgb}{1,0,0}
\renewcommand{\leq}{\leqslant}
\renewcommand{\geq}{\geqslant}
\renewcommand{\ge}{\geqslant}
\renewcommand{\le}{\leqslant}
\renewcommand{\epsilon}{\varepsilon}
\newcommand{\eps}{\epsilon}
\newcommand{\F}{\mathbb{F}}
\newcommand{\cA}{\mathcal A}
\newcommand{\cC}{\mathcal C}
\newcommand{\cD}{\mathcal D}
\newcommand{\cH}{\mathcal H}
\newcommand{\cM}{\mathcal M}
\newcommand{\cP}{\mathcal P}
\newcommand{\cQ}{\mathcal Q}
\newcommand{\cR}{\mathcal R}
\newcommand{\cX}{\mathcal X}
\newcommand{\barA}{\bar{A}}
\newcommand{\Esymb}{{\bf E}}
\newcommand{\Psymb}{{\bf Pr}}
\DeclareMathOperator*{\E}{\Esymb}
\DeclareMathOperator*{\ProbOp}{\Psymb}
\renewcommand{\Pr}{\ProbOp}
\def\notes{0}
 \newcommand{\gnote}[1]{\ifnum\notes=1{{\sf\color{red} [Gopi comment: #1]}}\fi}
 \newcommand{\anote}[1]{\ifnum\notes=1{{\sf\color{blue} [Arnab comment: #1]}}\fi}
 \newcommand{\avnote}[1]{\ifnum\notes=1{{\sf\color{blue} [Avishay comment: #1]}}\fi}
\date{}
\author{
Arnab Bhattacharyya\thanks{Research partially supported by a DST Ramanujan Fellowship.}\\
%Department of Computer Science and Automation\\
Indian Institute of Science\\
{\small \texttt{arnabb@csa.iisc.ernet.in}}
\and 
Sivakanth Gopi\thanks{Research partially supported by NSF grants CCF-1523816 and CCF-1217416. Part of this research was done while the author was at Microsoft Research, Redmond.}\\
%Department of Computer Science\\
Princeton University\\
{\small \texttt{sgopi@cs.princeton.edu}}
\and 
Avishay Tal\thanks{Research supported by the Simons Collaboration on Algorithms and Geometry, and by the National Science Foundation grant No. CCF-1412958.}\\
Institute for Advanced Study\\
{\small \texttt{avishay.tal@gmail.com}}
}
\begin{document}
\maketitle
\thispagestyle{empty}
\begin{abstract}
A locally correctable code (LCC) is an error correcting code that allows correction of any arbitrary coordinate of a corrupted codeword by querying only a few coordinates.
We show that any $2$-query locally correctable code $\mathcal{C}: \{0,1\}^k \to \Sigma^n$ that can correct a constant fraction of corrupted symbols must have $n \geq \exp(k/\log|\Sigma|)$ under the assumption that the LCC is {\em zero-error}. We say that an LCC is zero-error if there exists a non-adaptive corrector algorithm that succeeds with probability $1$ when the input is an uncorrupted codeword. All known constructions of LCCs are zero-error. 

Our result is tight upto constant factors in the exponent. The only previous lower bound on the length of 2-query LCCs over large alphabet was $\Omega((k/\log|\Sigma|)^2)$ due to Katz and Trevisan (STOC 2000). Our bound implies that zero-error LCCs cannot yield $2$-server private information retrieval (PIR) schemes with sub-polynomial communication. Since there exists a $2$-server PIR scheme with sub-polynomial communication (STOC 2015) based on a zero-error $2$-query locally decodable code (LDC), we also obtain a separation between LDCs and LCCs over large alphabet. 
\ignore{
For our proof of the result, we need a new decomposition lemma for directed graphs that may be of independent interest. Given a directed graph $G$, our decomposition uses the directed version of the Szemer\'edi regularity lemma due to Alon and Shapira (STOC 2003) to find an equipartition of almost all of $G$ into subgraphs which are either edge-expanding or empty.}
\end{abstract}
\newpage
\setcounter{page}{1}
\section{Introduction}\label{sec:intro}
In this work, we study error-correcting codes that are equipped with local algorithms. A code is called a \textsf{locally correctable code (LCC)} if there is a randomized algorithm which, given an index $i$ and a received word $w$ close to a codeword $c$ in Hamming distance, outputs $c_i$ by querying only a few positions of $w$. The maximum number of positions of $w$ queried by the local correction algorithm is called the \textsf{query complexity} of the LCC. 

The main problem studied regarding LCCs is the tradeoff between their query complexity and length. Intuitively, these two parameters enforce contrasting properties. Small query complexity means that individual codeword symbols carry substantial information, while short length along with resilience to corruption means that information is spread out among the codeword symbols. In this paper, we explore one end of the spectrum of tradeoffs by studying $2$-query locally correctable codes.

Also called ``self-correction'', the idea of local correction originated in works by Lipton \cite{Lip90} and by Blum and Kannan \cite{BK89}  on program checkers. In particular, \cite{Lip90, BF90} used the fact that the Reed-Muller code is locally correctable to show average-case hardness of the Permanent problem. LCCs are closely related to \textsf{locally decodable codes (LDCs)}, where the goal is to recover a symbol of the underlying message when given a corrupted codeword using a small number of queries \cite{KT00}. LDCs are weaker than LCCs, in the sense that any LCC can be converted into an LDC while preserving relevant parameters (see Appendix~\ref{app:LCC vs LDC} for a formal statement and proof). LDCs and LCCs have found applications in derandomization and hardness results \cite{STV01, DS07,KS09}.  See \cite{Yek11} for a detailed survey on LDCs and LCCs, as of 2010. In more recent years, the analysis of LDCs and LCCs has led to a greater understanding of basic problems in incidence geometry, the construction of design matrices and the theory of matrix scaling, e.g. \cite{BDWY11, DSW12, DSW14}.

One particularly important feature of LDCs is their tight connection to {\em information-theoretic private information retrieval (PIR)} schemes. PIR is motivated by the scenario where a user wants to retrieve an item from a database without revealing to the database owner what item he is asking for. Formally, the user wants to retrieve $x_i$ from a $k$-bit database ${\bf x} = (x_1, \dots, x_k)$. A trivial solution is for the database owner to transmit the entire database no matter what query the user has in mind, but this has a huge communication overhead. Chor et al.~\cite{CGKM98} observed that while with one database, nothing better than the trivial solution is possible, there are non-trivial PIR schemes if multiple servers can hold replicas of the database. It turns out that $t$-server PIR schemes with low communication are roughly equivalent to short $t$-query LDCs.
 More precisely, a $2$-server PIR scheme for $k$ bits of data with $s$ bits of communication translates to a $2$-query LDC $\cC: \BC^k \to \Sigma^{2^s}$ where $\Sigma = \BC^s$. Note that in this translation, $|\Sigma|$ equals the length of the code.

Let $\mathcal{C}:\BC^k \to \Sigma^n$ be a 2-query LDC/LCC such that the corrector algorithm can tolerate corruptions at $\delta n$ positions. Katz and Trevisan in their seminal work \cite{KT00} showed that for $2$-query LDCs, $n \geq \Omega(\delta (k/\log |\Sigma|)^2)$. (Since LDCs are weaker than LCCs, a lower bound on the length of LDCs also implies a lower bound on the length of LCCs).  More than 15 years later,                                                                                                                                                                                                                                                                                                                                                                                                                                                                                                                                                                                                                                                                                                                                                                                                                                                                                                                                                                                                                                                                                                                                                                                                                                                                                                                                                                                                                                                                                                                                                                                                                                                                                                                                                                                                                                                                                                                                                                                                                                                                                                                                                                                                                                                                                                                                                                                                                                                                                                                  the Katz-Trevisan bound is still the best known for large alphabet $\Sigma$. However for small alphabet size, the dependence on $k$ is shown to be exponential. Goldreich et al. \cite{GKST02} showed that $n \geq \exp(\delta k/|\Sigma|)$ for linear 2-query LDCs, while Kerenedis and de Wolf \cite{KdW03} (with further improvements in \cite{WdW05}) showed using quantum techniques that $n \geq \exp(\delta k/|\Sigma|^2)$ for arbitrary 2-query LDCs. But these lower bounds become trivial when $|\Sigma|=\Omega(n)$. However, the case of large alphabet $|\Sigma| \approx n$ is quite important to understand as this is the regime through which we would be able to prove lower bounds on the communication complexity of PIR schemes. 

Given the lack of progress on LDC and PIR lower bounds, it is a natural question to ask whether strong lower bounds are possible for LCCs. In this work, we demonstrate an exponential improvement on the Katz-Trevisan bound for {\em zero-error LCCs}. We define a zero-error LCC to be an LCC for which the corrector algorithm is non-adaptive and succeeds with probability 1  when the input is an uncorrupted codeword. All current LCC constructions are zero-error, and in fact, any linear LCC can be made zero-error. 

\begin{theorem}[Informal]\label{thm:main}
If $\cC: \BC^k \to \Sigma^n$ is a zero-error $2$-query LCC that can correct $\delta n$ corruptions, then $n \geq \exp(\poly(\delta)\cdot k/\log|\Sigma|)$.\footnote{An earlier version \cite{BGprelim} of this paper showed that $n \geq \exp(c_\delta\cdot k/\log|\Sigma|)$ where $c_\delta$ has tower type dependence on $\delta$ due to the use of  the Szemer\'edi regularity lemma.}
\end{theorem}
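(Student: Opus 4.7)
The plan has three phases: a Katz--Trevisan-style combinatorial setup, a structural decomposition of an associated directed graph via regularity and expansion, and a propagation-based entropy bound.

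First, I would exploit the zero-error and non-adaptivity hypotheses to peel off, for every coordinate $i \in [n]$, a matching $M_i \subseteq \binom{[n]}{2}$ of size $\Omega(\delta n)$ together with, for each pair $(j_1,j_2) \in M_i$, a deterministic function $\phi_{i,j_1,j_2}\colon \Sigma^2 \to \Sigma$ satisfying $c_i = \phi_{i,j_1,j_2}(c_{j_1},c_{j_2})$ identically on $\mathcal{C}$. This is where the zero-error property is used crucially: in its absence one only gets the identity with high probability over the corrector's coins, and the ensuing combinatorics loses its clean functional form. From this point on, the target is to exhibit a \emph{seed} $S \subseteq [n]$ of size $O(\poly(1/\delta) \cdot \log n)$ such that the values $c|_S$ determine $c$ under iterated application of the $\phi$'s; the entropy inequality $2^k = |\mathcal{C}| \leq |\Sigma|^{|S|}$ then converts such a seed bound into $n \geq \exp(\poly(\delta) \cdot k/\log|\Sigma|)$.

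Next, I would encode the matchings as a directed graph $G$ on $[n]$ by placing, for each $(j_1,j_2) \in M_i$, directed edges $j_1 \to i$ and $j_2 \to i$ recording that $i$ is decoded by the pair $(j_1,j_2)$; the total edge count is $\Omega(\delta n^2)$. I would then apply the directed Szemerédi regularity lemma of Alon--Shapira to obtain an equipartition $V_1 \sqcup \cdots \sqcup V_T$ on which most ordered pairs of parts are $\epsilon$-regular. The central and most delicate step---where I expect the main difficulty---is an ``expander-or-empty'' decomposition lemma: using the density lower bound on $G$ together with the regularity partition, one must show that almost all edges of $G$ can be assembled into pieces that are either \emph{edge-expanding} (every small vertex subset has many outgoing edges internal to the piece) or essentially empty. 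Upgrading directed regularity to one-sided expansion is subtle because regularity alone does not rule out near-bipartite or near-acyclic pieces, which would contain few of the ``decoding triangles'' $(j_1,j_2,i)$ that drive propagation.

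In the final phase, I would run the natural decoding closure inside each edge-expanding piece: by starting from a carefully chosen seed $S_\ell \subseteq V_\ell$ of size $O(\log m_\ell)$ in a piece of size $m_\ell$, iterated application of the $\phi$'s grows the decoded set by at least a constant factor per round---expansion guarantees that a constant fraction of coordinates $i$ in the piece have some pair of $M_i$ already decoded at each step---so the piece saturates after $O(\log m_\ell)$ rounds. Coordinates that sit in empty pieces or in the regularity-exceptional set must be handled separately, either by inclusion in the seed or by showing they are automatically decoded once the expanding pieces have been; this accounting is what forces $T$ and $\epsilon$ to be chosen as $\poly(1/\delta)$-dependent constants. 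Assembling the per-piece seeds yields a global seed of size $O(\poly(1/\delta) \cdot \log n)$, and the entropy inequality then finishes the proof. The main obstacles I anticipate are (i) proving the expander-or-empty decomposition lemma for directed graphs and (ii) verifying that the propagation indeed saturates at the claimed rate inside each expanding piece, despite the matchings $M_i$ being structured rather than random.
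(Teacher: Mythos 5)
Your proposal is a valid route to the qualitative result, but it reproduces the \emph{earlier} version of this paper's argument (the one based on directed Szemer\'edi regularity and an ``expander-or-empty'' decomposition, included here in Appendix~\ref{sec:decomposition}), not the proof the paper actually gives. The paper replaced that argument precisely because regularity forces the constant in the exponent to have \emph{tower-type} dependence on $\delta$, whereas the statement you are asked to prove claims $n \ge \exp(\poly(\delta)\cdot k/\log|\Sigma|)$. The paper's final proof (Section~\ref{sec:graph}) avoids regularity entirely: it uses a greedy ``Clean-Up Lemma'' (\cref{lemma:cleanup2}) to extract a subgraph of minimum degree $\Omega(\delta^2 n)$, an exponential-growth lemma (\cref{lemma:exp growth}) to grow $R(S)$ by a multiplicative factor in that subgraph, and an iteration (\cref{lemma:two cases}) that repeats $O(1/\delta^2)$ times, giving seed size $O(\delta^{-4}\log n)$. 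This elementary argument is what buys the $\poly(\delta)$ dependence. So as written, your proposal proves $n\ge\exp(c_\delta\cdot k/\log|\Sigma|)$ with $c_\delta$ tower-small, which is strictly weaker than the theorem.

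There is also a genuine gap in your setup phase. You assert that for \emph{every} $i\in[n]$ you can extract a matching $M_i$ of size $\Omega(\delta n)$ with associated deterministic recovery maps $\phi_{i,j_1,j_2}$. That is not true even for zero-error LCCs: the Katz--Trevisan reduction only yields a large matching in $\supp(\cQ_i)$ when $\supp(\cQ_i)$ is not coverable by a small vertex set. The paper's \cref{lem:lcc_matching} handles this by partitioning $[n]=T_1\cup T_2$: coordinates in $T_2$ get matchings, but coordinates in $T_1$ only admit a \emph{smooth} one-query corrector that succeeds with probability $\ge 2/3$ (not $1$). Those coordinates cannot be absorbed into the deterministic propagation; they must be learned separately by querying a random set of size $O(\tau^{-2}\log n)$ (\cref{claim:recovering_T1}), after which propagation handles $T_2$. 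Consequently your entropy step $|\cC|\le|\Sigma|^{|S|}$ is also too naive: because the $T_1$ recovery is only probabilistic, the paper invokes Fano's inequality (\cref{lem:oracleguess}) rather than a raw counting bound. Your proposal as stated silently assumes $T_1=\emptyset$, and fixing that requires both the smoothing argument and the information-theoretic lemma.
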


%To prove Theorem~\ref{thm:2lcc_lowerbound}, we need a lemma (Lemma~\ref{lem:expander_decomposition}) showing that any %graph with large minimum degree can be partitioned into a constant number of expander graphs after removing a small fraction of %vertices and edges. We will use Szemer\'edi's regularity lemma to prove that such a decomposition exists.

\subsection{Discussion of Main Result}
The lower bound in \cref{thm:main} is tight in its dependence on $k$ and $\Sigma$. Specifically, Yekhanin in the appendix of \cite{BDSS16} gives the following elegant construction of a $2$-query LCC $\cC: \BC^k \to \Sigma^n$ with $n = 2^{O(k/\log |\Sigma|)}$ for any $\delta \leq 1/6, \Sigma$ and $k$. Assume $|\Sigma| = 2^b$   and $b \mid k$ for simplicity. Write ${\bf x} \in \BC^k$ as $(x_{i,j})_{i \in [b], j\in [k/b]}$. Then, for any $a \in [2^{k/b}]$, let $(\cC({\bf x}))_a = (\cH(x_{i,1}, \dots, x_{i, k/b})_a : i \in [b]) \in \BC^b$ where $\cH$ is the classical Hadamard encoding $\cH: \BC^r \to \BC^{2^r}$ defined as $\cH({\bf y}) = (\sum_{i=1}^r y_i \xi_i \pmod 2: \xi_1, \dots, \xi_r \in \BC)$. It is well-known that $\cH$ is a $2$-query LCC, and from this, it is easy to check that $\cC$ is also. The parameters follow directly from the construction. A simple modification of this construction  gives $(2^{O(\delta k/\log |\Sigma|)}/\delta)$-length $2$-query LCCs that tolerate $\delta n$ corruptions. The proof of \cref{thm:main} shows $n \geq \exp(\delta^4 k/\log |\Sigma|)$ which is therefore tight upto $\poly(\delta)$ factors in the exponent.

The 2-query LCC described above is a linear code over $\F_{2^b}$. For linear codes $\cC \subseteq \F_q^n$ (i.e., $\cC$ is a linear subspace of $\F_q^n$), where $q = p^r$ for a prime $p$, \cite{BDSS16} showed that $n \geq \exp(\delta k/r) = \exp(\delta k/\log_p |\Sigma|)$ where $k = \log|\cC|$ is the message length and $|\Sigma| = p^r$. Thus, in terms of dependence on $k$ and $|\Sigma|$, we extend the result of \cite{BDSS16} from linear codes to all zero-error LCCs. Moreover, this work is much more elementary and simple than \cite{BDSS16} which uses non-trivial results from additive combinatorics.

It is important to note that \cref{thm:main} cannot be true for 2-query LDCs. Such a result would contradict the construction in \cite{DG15}  of a zero-error $2$-query LDC  with $\log n = \log |\Sigma| = \exp(\sqrt{\log k}) = k^{o(1)}$ and $\delta = \Omega(1)$. So, our result can be interpreted as giving a separation between zero-error LCCs and LDCs over large alphabet. We conjecture that the zero-error restriction in the theorem can be removed, which if true, would yield the first separation between general LCCs and LDCs.
It is still quite unclear what the correct lower bound for $2$-query LDCs should look like. As mentioned above, Katz and Trevisan \cite{KT00} show that $n \geq \Omega(\delta k^2/\log^2 |\Sigma|)$. And the quantum arguments of \cite{KdW03, WdW05} give the lower bound $n \geq \exp(\delta k/|\Sigma|^2)$ which becomes trivial when $|\Sigma| = \Omega(n)$. %Even the correct dependence on $\delta$ is not clear.

\ignore{
Let $b=\log(|\Sigma|)$ be the bit-length of the alphabet $\Sigma$. Let $\cC:\BC^k \to \Sigma^n$ be a $(2,\delta, b)$-LDC i.e. for every $i\in [k]$ there exists a matching $\cM_i$ over $[n]$ of size $\delta n$ such that following is true: $x_i$ can be recovered from $C(x)_u, C(x)_v$ for every $\{u,v\}\in \cM_i$. Define $f(n,b,\delta)$ be the largest $k$ for which such an LDC exists. It is proved using quantum arguments that \footnote{see \url{http://arxiv.org/pdf/quant-ph/0403140v2.pdf}} $$f(n,b,\delta)\lesssim 4^b \log(n)/\delta.$$
Also by Katz-Trevisan type random restrictions i.e. a random subset of size about $\sqrt{n/\delta}$ should contain an edge from most of the matchings, we can show $$f(n,b,\delta)\lesssim b\cdot \sqrt{n/\delta}.$$

Clearly both bounds are not optimal, one has a better dependence on $b$ and the other one has a better dependence on $n$. What about the dependence on $\delta$? Even this is not so clear. By concatenating $1/\delta$ Hadamard codes and stacking $b$ on top of each other, we get a code $(2,\delta,b)$-LDC with message length $$k \gtrsim b\log(\delta n)/\delta.$$ This seems like a good guess for the correct bound, but it's not! There exists $(2,1/2,b)$-LDC where $\log n=b=\exp(\sqrt{\log k})=k^{o(1)}$ arising from PIR schemes. These constructions violate such a bound! 

}

\subsection{Proof Overview}%\todo{A bit too wordy. Should try to cut it down.}
%Many previous works on $2$-query LDCs and LCCs view the local correcting algorithm as a graph, and ours is no different in this respect. 
\label{sec:introproof}
Like most prior work on $2$-query LDCs and LCCs, we view the query distribution of the local correcting algorithm as a graph. However, these previous works did not exploit the structure of the graph much beyond its size and degree, whereas our bound is due to a detailed use of the graph structure.

Let $\cC:\BC^k \to \Sigma^n$ be a $2$-query LCC. So, for every $i \in [n]$, there is a corrector algorithm $\cA_i$ that when given access to ${ z} \in \Sigma^n$ with Hamming distance at most $\delta n$ from some codeword ${ y}$, returns $y_i$ with probability at least $2/3$. Assuming non-adaptivity, the algorithm $\cA_i$ chooses its queries from a distribution on $[n]^2$.
Katz and Trevisan \cite{KT00} show how to extract a matching $M_i$ of $\Omega(\delta n)$ disjoint edges on $n$ vertices such that for any edge $e = (j, k)$ in $M_i$, $$\Pr_y\left[\cA_i(y) = y_i \mid \cA \text{ queries $y$ at positions }j\text{ and }k\right] > \frac12 +\eps$$ for some constant $\eps>0$, where the probability is over a uniformly random codeword $y\in \cC$. For zero-error LCCs, the situation is simpler in that essentially, for {\em every} codeword $y$ and edge $e \in M_i$, $\cA_i(y)$ returns $y_i$ when it queries the elements of $e$. This is not exactly correct but let us suppose it's true for the rest of this section.

Let $G$ be the union of $M_1, \dots, M_n$. So, for every edge $(j,k)$ in $G$, there is an $i$ such that $(j,k) \in M_i$. Suppose our goal is to guess an unknown codeword $c$ given the values of a small subset of coordinates of $c$. We assign labels in $\Sigma$ to vertices of $G$ corresponding to the subset of coordinates of $c$ that we know already. Now, imagine a propagation process where we deduce the labels of unlabeled vertices by using the corrector algorithms. For example, if $(j,k) \in M_i$, $j$ and $k$ are labeled but $i$ is not, we can use $\cA_i$ to deduce the label at vertex $i$. Similarly, if $(x,y) \in M_u$ and $(u, v) \in M_w$, and $x, y, v$ are labeled but $u$ and $w$ are not, we can run $\cA_u$ to deduce the label of $u$ and then $\cA_w$ to deduce the label of $w$. The set of labels we infer will be the values of $c$ at the corresponding coordinates. The goal of our analysis is to show that there is a set $S$ of $O_\delta(\log n)$\footnote{$O_\delta(\cdot)$ means that the involved constant can depend on $\delta$. } vertices such that if the labels of $S$ are known, then the propagation process can determine the labels of all $n$ vertices. This immediately implies that the total number of codewords, $2^k$, is at most $|\Sigma|^{|S|}$ and therefore, $k = O_\delta(\log n \cdot \log |\Sigma|)$. Instead, Katz and Trevisan~\cite{KT00} show that if you know the labels of $\sqrt{n}$ uniformly random coordinates, then you can recover the labels of most of the coordinates which leads to the bound $k=O_\delta(\sqrt{n} \cdot \log |\Sigma|)$. Intuitively, their lower bound is just one step of the propagation process.

The propagation process is perhaps more naturally described on a (directed) $3$-uniform hypergraph where there is an edge $(i,j,k)$ if $(j,k) \in M_i$. It ``captures'' $i$ if $(i,j,k)$ is an edge and $j, k$ are already captured. Coja-Oghlan et al.~\cite{COW12} study exactly this process on random undirected $3$-uniform hypergraphs in the context of constraint satisfaction problem solvers. Unfortunately, their techniques are specialized to random hypergraphs. The propagation process is also related to hypergraph peeling \cite{MT12, MW15}, but again, most theoretical work is limited to random hypergraphs.

To motivate our approach, suppose $M_1, \dots, M_n$ are each a perfect matching. For a set $S \subseteq [n]$, let $R(S)$ denote the set of vertices to which we can propagate starting from $S$. If $R(S) = [n]$, we are done. Otherwise, we show that we can double $|R(S)|$ by adding one more vertex to $S$. Note that for any $i \notin R(S)$, no edge in $M_i$ can lie entirely inside $R(S)$, for then, $i$ would also have been reached. So, each vertex in $R(S)$ must be incident to one edge in $M_i$ for every $i \notin R(S)$. This makes the total number of edges between $R(S)$ and $[n]\setminus R(S)$ belonging to $M_i$ for some $i \not \in R(S)$ equal to $|R(S)| \cdot (n-|R(S)|)$. By averaging, there must be $j \notin R(S)$ that is incident to at least $|R(S)|$ edges, each belonging to some $M_i$ for $i \notin R(S)$. Moreover, all these $|R(S)|$ edges must belong to matchings of different vertices. Hence, adding $j$ to $S$ doubles the size of $R(S)$. Hence, for some $S$ of size $O(\log n)$, $R(S) = [n]$.

In the above special case (where all the matchings were perfect), we used the fact that the size of the cut between $R(S)$ and the rest of the graph is large and that many of these edges belong to  $M_i$ for $i \not \in R(S)$. We observe that for any graph obtained from an LCC as above, this situation exists whenever $R(S)$ is not too large already and the minimum degree of every vertex in the graph is large (say, $\poly(\delta) \cdot n$). This is because each vertex in $R(S)$ will be incident to many edges in matchings $M_i$ for $i \notin R(S)$ (using the minimum degree requirement and that $|R(S)|$ is small) and such edges cannot have both endpoints inside $R(S)$ (as then $i \in R(S)$). So, indeed, there will be many edges with labels not in $R(S)$ crossing the cut, and averaging will yield a vertex whose addition to $S$ will make $R(S)$ grow by a multiplicative factor. Therefore, if the minimum degree requirement is met, we can keep repeating this process until $R(S)$ becomes large, of size $\poly(\delta)\cdot n$.
Now, in a key lemma of our proof, we show that for any graph obtained from an LCC as above, we can greedily find a subset of the vertices $V'$ such that the the subgraph induced by the vertices of $V'$ and the edges labeled by $V'$ has large minimum degree. So, we can repeatedly apply the above argument to $V'$ to find a subset $S$ of size $O_{\delta}(\log n)$ such that $R(S)$ contains $\poly(\delta)\cdot n$ vertices.

Recall that our goal is to find a small set $S$ such that $R(S)=[n]$. So, at this stage, we would ideally like to continue the argument on $V'' = [n]\setminus R(S)$. The only issue we can face is that the graph on $V''$ restricted to edges labeled by $V''$ may not have the LCC structure. Indeed, it could be that most edges labeled by $V''$ are not spanned by vertices in $V''$. However in this case, there will be a vertex $u$ in $V''$ incident to many $V''$-labeled edges that have their other endpoints in $R(S)$, so that we can increase $R(S)$ by adding $u$ to $S$. Thus, either $R(S)$ may be grown directly or else the rest of the vertices looks approximately like an LCC, so that we can recurse. Modulo some important technical details, our proof is now complete\footnote{An earlier version \cite{BGprelim} of this paper had a different argument for the main theorem, based on a ``decomposition theorem'' proved using the Szem\'eredi regularity lemma for directed graphs \cite{Szem78, AS04}. The idea was to partition the graph into a constant number of edge expanders. In each such part, the sizes of cuts are large and so the propagation process can be easily analyzed. The proof given here is simpler and yields much better dependence on $\delta$. However, because the decomposition theorem for directed graphs may be of general interest, we have included it in Appendix~\ref{sec:decomposition} of this paper.}.

\ignore{
When the matchings are not perfect, this may not happen.

(For instance, a codeword of length $n$ could be the concatenation of two LCC codewords of length $n/2$.) It is then natural to try to partition $G$ into a set of expanders, so that we can analyze the propagation for each part separately.
The paradigm of showing that a graph is close to a union of disjoint expander (or expander-like) subgraphs has found repeated success in graph theory and algorithms (e.g., \cite{LS93, LR99, GR99, Tre05, PT07, ABS10}; see \cite{MS15} for an overview), and many tools have been found for this purpose. For us, it seems essential that the number of expanders in the decomposition not depend on $n$. Szemer\'edi's celebrated regularity lemma \cite{Szem78} provides just such a guarantee. 

An added twist in our setup is that in our proof above, we not only wanted the size of the cut between $R(S)$ and the rest of the graph to be large but also, we wanted the edges in this cut to belong to $M_i$ for $i \not \in R(S)$. If they all belong to matchings for vertices inside $R(S)$, then adding a single new vertex to $S$ may not increase $R(S)$. Note that if we made the assumption that the LCC is `undirected', meaning that if $(j,k) \in M_i$, then $(i,j) \in M_k$ and $(i,k) \in M_j$, then all edges in the cut between $R(S)$ and the rest of the graph would be in matchings corresponding to vertices outside $R(S)$, and the situation would be simpler. To get around this assumption, it turns out that a directed version of the regularity lemma is more appropriate. 

We consider the directed graph $\vecarrow{G}$ where for any $(j,k) \in M_i$, there are two directed edges $(j,i)$ and $(k,i)$. We then invoke a regularity lemma for dense directed graphs due to Alon and Shapira \cite{AS04} and reformulate it for graphs with a lower bound on the minimum in-degree. This version of the lemma may be of independent interest. Our lemma yields a collection of vertex-disjoint subgraphs $U_1, U_2, \dots, U_K$ that include all but a small fraction of vertices and edges; here, $K$ is independent of the size of the graph. Moreover, each $U_i$ is either empty or edge-expanding, and there are no edges from $U_i$ to $U_j$ for $i>j$.  Once this decomposition is in place, we first find a set $S_1$ that propagates to all of $U_1$, then a set $S_2$ to propagate to all of $U_2$, and so on. The edge-expansion inside the $U_i$'s is enough to conclude that each $|S_i| = O(\log n)$, and the proof is complete.}

The zero-error assumption seems necessary to make the propagation process well-defined. Otherwise, for each labeled vertex, there is some probability that the label is incorrect for the codeword in question. But since there may be $\Omega(\log n) = \omega(1)$ steps of propagation, the error probability may blow up by this factor. So, it seems we need different techniques to handle correctors that have constant probability of error when the input is a codeword. One possibility is using information theory to better handle the spread of error\footnote{This approach is taken in \cite{Jain06} to prove an exponential lower bound for smooth 2-query LDCs over binary alphabet when the decoder has subconstant error probability. Jain's analysis seems to work only for binary codes but is similar in spirit to ours.}. %We note that there is a simple information-theoretic proof of the regularity lemma \cite{Tao06}, and so perhaps, information theory is the right language to describe the whole argument. However, this appears quite challenging at the moment. \avnote{these last two sentences seem out of context now, since the regularity lemma was not mentioned earlier.}

\section{Zero-error 2-query LCCs}\label{sec:prelim}
We begin by formally defining zero-error $2$-query LCCs.

\begin{definition}
Let $\Sigma$ be some finite alphabet and let $\cC\subset \Sigma^n$ be a set of codewords. $\cC$ is called a $(2,\tau)$-LCC with zero-error if there exists a randomized algorithm $\cA$ such that following is true:
\begin{enumerate}
\item $\cA$ is given oracle access to some $z\in \Sigma^n$ and an input $i\in [n]$. It outputs a symbol in $\Sigma$ after making at most $2$ non-adaptive queries to $z$.
\item  If $z\in \Sigma^n$ is $\tau$-close to some codeword $c\in \cC$ in Hamming distance, then for every $i\in [n]$, $\Pr[\cA^z(i)=c_i]\ge 2/3.$
\item If $c\in \cC$, then for every $i\in [n]$, $\Pr[\cA^c(i)=c_i]=1$ i.e. if the received word has no errors, then the local correction algorithm will not make any error.
\end{enumerate}
\end{definition}
Note that the above definition differs from the standard notion of non-adaptive 2-query LCCs only in part (3) above. The choice of $2/3$ in part (2) of the definition above is somewhat arbitrary. We can make it any constant greater than $1/2$. More generally, it is only required that for every $\sigma \neq c_i, \Pr[\cA^z(i) = c_i] > \Pr[\cA^z(i) = \sigma] + \eps$ for some $\eps > 0$, i.e., $c_i$ should win the plurality vote among all symbols by a constant margin.

We next show that the corrector for any zero-error LCC can be brought into a ``normal" form. A similar statement is known for general LDCs and LCCs \cite{KT00, Yek11} but we need to be a bit more careful because we want to preserve the zero-error property. Note that the proof overview in \cref{sec:introproof} assumed that the set $T_1$ below is empty.

%\begin{lemma}\label{lem:lcc_matching}
%Let $\cC\subset \Sigma^n$ be a $(2,\tau)$-LCC with zero error then for every $i\in [n]$ there exists a matching\footnote{A matching is a family of disjoint subsets.} $\cM_i$ of $q$-tuples of coordinates in $[n]\setminus \{i\}$ such that the following is true:
%\begin{enumerate}
%\item  For every $c\in \cC$,  $c_i$ can be recovered from $(c_{j_1},\cdots,c_{j_q})$ for any $(j_1,\cdots,j_q) \in \cM_i$ i.e. there exists algorithms $\cA_{j_1,\cdots,j_q}$ for every tuple $(j_1,\cdots,j_q)\in \cM_i$ such that for every $c\in \cC$, $$\cA_{j_1,\cdots,j_q}(c_{j_1},\cdots,c_{j_q})=c_i.$$
%\item For every $i$, $|\cM_i|\ge \frac{\tau}{q}n$
%\end{enumerate}
%\end{lemma}

\begin{lemma}\label{lem:lcc_matching}
Let $\cC\subset \Sigma^n$ be a $(2,\tau)$-LCC with zero error. Then, there exists a partition of $[n]=T_1\cup T_2$ such that:

\begin{enumerate}
\item For every $i\in T_1$, there exists a distribution $\cD_i$ over $[n]\cup \{\phi\}$ and algorithms $\cR^i_j$ for every $j\in [n]\cup\{\phi\}$ such that for every codeword $c\in \cC$, $$\Pr_{j\sim \cD_i}\left[\cR^i_j(c_j)=c_i\right]\ge \frac{2}{3}.\footnote{Here $c_\phi$ is an empty input defined for ease of notation.}$$ Moreover the distribution $\cD_i$ is smooth over $[n]$ i.e. for every $j\in [n]$, $\Pr_{\cD_i}[j]\le \frac{4}{\tau n}$.

\item For every $i\in T_2$, there exists a matching $\cM_i$ of edges in $[n]\setminus \{i\}$ of size $|\cM_i|\ge \frac{\tau}{4}n$ such that: For every $c\in \cC$,  $c_i$ can be recovered from $(c_j,c_k)$ for any $(j,k) \in \cM_i$ i.e. there exists algorithms $\cR^i_{j,k}$ for every edge $(j,k)\in \cM_i$ such that for every $c\in \cC$, $$\cR^i_{j,k}(c_j,c_k)=c_i.$$

\end{enumerate}
\end{lemma}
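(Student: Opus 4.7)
The plan is, for each index $i \in [n]$, to perform a case split based on whether the query-support graph of the corrector $\cA^{\cdot}(i)$ contains a large matching avoiding $i$. I will place $i$ in $T_2$ if yes (and take the matching directly), and in $T_1$ if no, converting the $2$-query corrector into a smooth $1$-query corrector via a single corruption that simultaneously hits a vertex cover and a small set of heavy vertices.

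First, fix $i$ and let $\mu_i$ be the distribution of the unordered query pair made by $\cA^{\cdot}(i)$. The zero-error assumption implies that for every $\{a,b\} \in \supp(\mu_i)$ there is a deterministic $\cR^i_{a,b}$ with $\cR^i_{a,b}(c_a, c_b) = c_i$ for all $c \in \cC$. Let $G_i$ be the multigraph on $[n] \setminus \{i\}$ whose edges are the distinct-element pairs in $\supp(\mu_i)$ that avoid $i$. If $G_i$ has a matching of size at least $\tau n/4$, place $i \in T_2$ and take such a matching as $\cM_i$; the required recovery algorithms come for free from zero-error.

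Otherwise place $i \in T_1$. Since the maximum matching of $G_i$ is less than $\tau n/4$, the bound $\text{min vertex cover} \leq 2 \cdot \text{max matching}$ gives a vertex cover $V_i$ of $G_i$ with $|V_i| < \tau n / 2$. Define $\nu_i(j) := \Pr_{\{a,b\} \sim \mu_i}[j \in \{a,b\}]$; since $\sum_j \nu_i(j) \leq 2$, the heavy set $H_i := \{j \in [n] : \nu_i(j) > 4/(\tau n)\}$ satisfies $|H_i| < \tau n / 2$. Set $S_i := V_i \cup H_i \cup \{i\}$, so $|S_i| \leq \tau n$. Fix arbitrary $\sigma_j \in \Sigma$ for each $j \in S_i$, and for each $c \in \cC$ define $z^c \in \Sigma^n$ by $z^c_j = \sigma_j$ for $j \in S_i$ and $z^c_j = c_j$ otherwise. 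Then $z^c$ is $\tau$-close to $c$, and the LCC guarantee gives
\[
\Pr_{\{a,b\}\sim \mu_i}\bigl[\cR^i_{a,b}(z^c_a, z^c_b) = c_i\bigr] \;\geq\; \tfrac{2}{3}.
\]
Every pair $\{a,b\} \in \supp(\mu_i)$ is either (i) a self-loop at some vertex $v$, (ii) a pair touching $i \in S_i$, or (iii) an edge of $G_i$ covered by $V_i \subseteq S_i$; after substituting the $\sigma$-values on $S_i$, each recovery therefore depends on at most one genuinely unknown $c$-coordinate. I define $\cD_i$ by sampling $\{a,b\}\sim \mu_i$ and outputting $\phi$ when that unknown coordinate lies in $S_i$ (or does not exist), and otherwise outputting the unknown coordinate; $\cR^i_j$ is the randomized algorithm that re-samples $\{a,b\}$ from $\mu_i$ conditioned on $\cD_i=j$ and applies $\cR^i_{a,b}$ using the hardcoded $\sigma$-values and $c_j$. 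By construction $\Pr_{j\sim \cD_i}[\cR^i_j(c_j)=c_i]\geq 2/3$, and for each $j \in [n]$ either $j\in S_i$ (so $\Pr[\cD_i = j] = 0$) or $j\notin H_i$, giving $\Pr[\cD_i = j] \leq \nu_i(j) \leq 4/(\tau n)$.

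The main technical obstacle is to make the two structural reductions---matching-vs-vertex-cover and heavy-vertex smoothing---simultaneously fit within a single corruption of budget $\tau n$; the thresholds $\tau n/4$ and $4/(\tau n)$ are chosen precisely so that $|V_i|+|H_i|+1 \leq \tau n$. A mild subtlety is the handling of self-loops and pairs involving $i$ itself, neither of which are covered by $V_i$; both are absorbed cleanly by including $i$ in $S_i$ and routing self-loops at light vertices directly into the $1$-query branch of $\cD_i$.
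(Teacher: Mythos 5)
Your proof is correct and follows essentially the same route as the paper: for each $i$, split on whether the query-support contains a large matching avoiding $i$; if yes, take the matching directly (zero-error gives deterministic recovery), and if no, corrupt a small vertex cover together with the heavy set (and $i$ itself) to reduce to a smooth one-query corrector via the LCC guarantee. Your treatment is slightly more careful than the paper's in explicitly restricting $G_i$ to $[n]\setminus\{i\}$ and in noting how self-loops and pairs touching $i$ are absorbed, but the decomposition, thresholds, and argument are the same.
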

\begin{proof}
Fix $\eps=\tau/4$. Let $\cA$ be the local corrector algorithm for $\cC$ and let $\cQ_i$ be the distribution over 2-tuples of $[n]$ corresponding to the queries $\cA(i)$ makes to correct coordinate $i$.\footnote{Wlog, we can assume $\cA(i)$ always queries two coordinates.} Let $\supp(\cQ_i)$ be the set of edges in the support of $\cQ_i$. We have two cases:\\
\textbf{Case 1:} $\supp(\cQ_i)$ contains a matching of size $\eps n$.\\
In this case, we include $i\in T_2$ and define $\cM_i$ to be a matching of size $\eps n$ in $\supp(\cQ_i)$. Let $\cR^i_{j,k}(z_j,z_k)$ be the output\footnote{Note that $\cR^i_{j,k}$ might use additional randomness.} of $\cA^z(i)$ when it samples $(j,k)$ from the distribution $\cQ_i$. So we have for every $\sigma\in \Sigma$, $$\Pr_{(j,k)\sim \cQ_i}[\cR^i_{j,k}(z_j,z_k)=\sigma]=\Pr[\cA^z(i)=\sigma].$$  Now since our LCC is zero-error, for every $(j,k)\in \supp(\cQ_i)$, we have $\cR^i_{j,k}(c_j,c_k)=c_i$. This takes care of part (2).\\
\textbf{Case 2:} $\supp(\cQ_i)$ doesn't contain a matching of size $\eps n$.\\
In this case we include $i\in T_1$. Since $\supp(\cQ_i)$ doesn't contain a matching of size $\eps n$, there exists a vertex cover of size at most $2\eps n$, say $V_i$. %Define a distribution $\tilde{\cD_i}$ on $[n]$ as follows: $$\Pr_{\tilde{\cD_i}}[\ell]=\Pr_{(j,k)\sim \cQ_i}\left[j=\ell\ \vert\  j\in [n]\setminus V_i, k\in V_i\right]$$ i.e. the probability that a randomly sampled edge falls on $\ell\in [n]\setminus V_i$ and the other vertex of the edge falls in $V_i$, for $\ell \in V_i$ the probability is zero.
Also define $B_i\subset [n]$ to be the set of vertices which are queried with high probability by $\cA^z(i)$ i.e. $$B_i=\left\{j: \Pr[\cA^z(i)\ \text{queries}\ j]\ge \frac{1}{\eps n}\right\}.$$ Clearly $|B_i|\le 2\eps n$ because $\cA^z(i)$ makes at most two queries.
 
 We now define a new one-query corrector for $i$, $\tilde{\cA^z}(i)$ as follows: simulate $\cA^z(i)$, but whenever $\cA^z(i)$ queries $z$ at a coordinate in $V_i\cup B_i $, $\tilde{\cA^z}(i)$ doesn't query that coordinate and assumes that the queried coordinate is $0$ (or some fixed symbol in $\Sigma$).
%\begin{algorithm}
%\caption{ One query corrector for $i\in T_2$, $\tilde{\cA^z}(i)$}
%\label{alg:onequerycorrector}
%\begin{algorithmic}
%\STATE Simulate $\cA^z(i)$, but whenever $\cA$ queries a coordinate in $V_i$, return $0$ (or some fixed symbol in $\Sigma$).
%\end{algorithmic}
%\end{algorithm}
Note that $\tilde{\cA^z}(i)$ makes at most one query to $z$ since $V_i$ is a vertex cover for the support of $\cQ_i$.
 Also $\tilde{\cA^c}(i)$ behaves exactly like $\cA^{c'}(i)$ where ${c'}$ is the word formed by zeroing out the $V_i\cup B_i$ coordinates of $c$. Since $|V_i\cup B_i|\le 4\eps n\le \tau n$, we have $$\Pr[\tilde{\cA^c}(i)=c_i]=\Pr[\cA^{c'}(i)=c_i]\ge \frac{2}{3}.$$ Now define the distribution $\cD_i$ over $[n]\cup \{\phi\}$ as: $$\Pr_{\cD_i}[j]=\Pr[\tilde{\cA^z}(i)\ \text{queries}\ j]$$ for $j\in [n]$ and $$\Pr_{\cD_i}[\phi]=\Pr[\tilde{\cA^z}(i)\ \text{doesn't make any query}].$$ Since we never query elements of $B_i$, we have the required smoothness i.e. $\Pr_{\cD_i}[j]\le 1/(\eps n)$ for all $j\in [n]$. Also define $\cR^i_j(z_j)$ to be the output (can be randomized) of $\tilde{\cA^z}(i)$ when it queries $j\in [n]$ and $\cR^i_\phi(c_\phi)$ to be the output (can be randomized) of $\tilde{\cA^z}(i)$ when it doesn't make any query where $c_\phi$ is an empty input defined for ease of notation. By definition, we have $$\Pr_{j\sim\cD_i}[\cR^i_j(c_j)=c_i]=\Pr[\tilde{\cA^c}(i)=c_i]\ge \frac{2}{3}.$$ This proves part (1).
%\gnote{Need to fix the leftover probability that $\tilde{\cA(i)}$ doesn't make any queries.}
\end{proof}
\section{Proof of lower bound}\label{sec:lowerbound}
\def\({\left(}
\def\){\right)}
\subsection{An information theoretic lemma}
%\gnote{Discuss Fano's inequality and the information theoretic prelims.}
The proof of Theorem~\ref{thm:main} works by showing that there is randomized algorithm which can guess an unknown codeword $c\in \cC\subset \Sigma^n$ with high probability by making a small number of queries. From this we would like to show that $|\cC|$ cannot be large. We will apply Fano's inequality which is a basic information theoretic inequality to achieve this. We will assume familiarity with basic notions in information theory; we refer the reader to \cite{CT12} for precise definitions and the proofs of the facts we use. Given random variables $X,Y,Z$, let $H(X)$ be the entropy of $X$ which is the amount of information contained in $X$. $H(X|Y)$ is the conditional entropy of $X$ given $Y$ which is the amount of information left in $X$ if we know $Y$. The mutual information $I(X;Y)=H(X)-H(X|Y)=H(Y)-H(Y|X)$ is the amount of common information between $X,Y$. If $X,Y$ are independent, then $I(X;Y)=0$. The conditional mutual information $I(X;Y|Z)$ is the mutual information between $X,Y$ given $Z$. We have the following chain rule for mutual information: $$I(X;YZ)=I(X;Z)+I(X;Y|Z).$$ We also need the following basic inequality: $$I(X;Y|Z)\le H(X|Z) \le \log |\cX|$$ where $\cX$ is the support of the random variable $X$. We will now state Fano's inequality which says that if we can predict $X$ very well from $Y$ i.e. there is a predictor $\hat{X}(Y)$ such that $\Pr[\hat{X}(Y)\ne X]\le p_e$ where $p_e$ is small, then $H(X|Y)$ should be small as well (see \cite{CT12} for a proof). More precisely,
\begin{align*}
H(X|Y)\le h(p_e)+p_e\log(|\cX|-1) \tag{Fano's inequality}
\end{align*}
 where $h(x)=-x\log x-(1-x)\log(1-x)$ is the binary entropy function and $\cX$ is the support of random variable $X$.
\begin{lemma}\label{lem:oracleguess}
Suppose there exists a randomized algorithm $\cP$ such that for every $c\in \cC\subset \Sigma^n$, given oracle access to $c$, $\cP$ makes at most $t$ queries to $c$ and outputs $c$ with probability $\ge 1/2$, then $\log |\cC|\le O(t\log|\Sigma|).$
\end{lemma}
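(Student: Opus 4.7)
The plan is to use Fano's inequality together with a standard chain-rule bound on mutual information. The intuition is that the random transcript of $\cP$ must carry almost all the information in the codeword (otherwise $\cP$ could not recover it with probability $\ge 1/2$), but each of the $t$ answers lies in $\Sigma$ and therefore can carry at most $\log|\Sigma|$ bits of information about $C$.

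First, I would take $C$ to be uniformly distributed on $\cC$, so that $H(C) = \log|\cC|$. Let $R$ denote the internal randomness of $\cP$, sampled independently of $C$, and let $A_1,\dots,A_t$ denote the (possibly adaptive) answers seen by $\cP$ when run on the oracle $C$; each $A_i$ is equal to $C_{Q_i}$ for a query $Q_i$ that is a deterministic function of $R$ and $A_1,\dots,A_{i-1}$. Writing $T=(R,A_1,\dots,A_t)$, the output of $\cP$ is a (randomized) function of $T$ that equals $C$ with probability $\ge 1/2$. Applying Fano's inequality to the predictor $\hat C(T)$ with error probability $p_e\le 1/2$ gives
\begin{equation}
H(C\mid T)\;\le\; h(p_e)+p_e\log(|\cC|-1)\;\le\; 1+\tfrac12\log|\cC|.
\end{equation}

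Next, I would lower-bound the mutual information: $I(C;T)=H(C)-H(C\mid T)\ge \tfrac12\log|\cC|-1$. To upper-bound $I(C;T)$, use that $R$ is independent of $C$, so $I(C;R)=0$, and then apply the chain rule:
\begin{equation}
I(C;T)\;=\;I(C;R)+I(C;A_1,\dots,A_t\mid R)\;=\;\sum_{i=1}^{t}I(C;A_i\mid R,A_1,\dots,A_{i-1}).
\end{equation}
Each summand is at most $H(A_i\mid R,A_1,\dots,A_{i-1})\le \log|\Sigma|$ because $A_i\in\Sigma$. Hence $I(C;T)\le t\log|\Sigma|$.

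Combining the two bounds yields $\tfrac12\log|\cC|-1\le t\log|\Sigma|$, i.e.\ $\log|\cC|\le 2t\log|\Sigma|+2=O(t\log|\Sigma|)$, as required. There is no real obstacle here; the one small subtlety is making sure adaptivity is handled correctly, which is why I keep $R$ together with the full answer history in the conditioning of the chain rule — this makes each query $Q_i$ a deterministic function of the conditioning and lets us charge each step at most $\log|\Sigma|$ bits.
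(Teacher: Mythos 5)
Your proof is correct and follows essentially the same route as the paper's: Fano's inequality to lower-bound $I(C;T)$ by $\tfrac12\log|\cC|-1$, and the chain rule plus the fact that each query answer lives in $\Sigma$ to upper-bound it by $t\log|\Sigma|$. The only (small and welcome) refinement is that you condition on the full answer history $R,A_1,\dots,A_{i-1}$ step by step, which handles adaptive queries; the paper's version fixes the query set $S(R)$ as a function of $R$ alone, which suffices because the algorithm $\cP$ it constructs is non-adaptive.
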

\begin{proof}
Let $X$ be a random variable which is uniformly distributed over $\cC$. Let $R$ be the random variable corresponding to the random string of  the algorithm $\cP$ and let $S(R)$ be the set of coordinates queried by $\cP$ when the random string is $R$. We can guess the value of $X$ with probability $\ge 1/2$ given $X_{S(R)},R$ where $X_{S(R)}$ is the restriction of $X$ to $S(R)$. By Fano's inequality, $$H(X\ \vert\ X_{S(R)},R) \le h(1/2)+\frac{1}{2}\cdot \log(|\cC|-1)\le 1+ \frac{1}{2}\log|\cC|.$$
We can bound the mutual information between $X$ and $X_{S(R),R}$ as follows:
\begin{align*}
I(X;X_{S(R)},R) &= I(X;R)+I(X;X_{S(R)}|R) \tag{Chain rule for mutual information.}\\
&\le 0+H(X_{S(R)}|R) \tag{Since $X$ and $R$ are independent.}\\
&\le t\log|\Sigma|.
\end{align*}
But we also have $$I(X;X_{S(R)},R)= H(X)-H(X|X_{S(R)},R) \ge \log |\cC| - \frac{1}{2}\log |\cC|-1\ge \frac{1}{2}\log|\cC| - 1.$$
Combining the upper and lower bound for $I(X;X_{S(R)},R)$, we get the required bound. 
\end{proof}

\subsection{Proof of Theorem~\ref{thm:main}}
The following is a restatement of Theorem~\ref{thm:main}.
\begin{theorem}\label{thm:2lcc_lowerbound_directed}
Let $\cC\subset \Sigma^n$ be a $(2,\tau)$-LCC which is zero-error, then $|\cC|\le \exp\left(O(\tfrac{1}{\tau^4} \cdot \log n \cdot\log{|\Sigma|})\right)$.
\end{theorem}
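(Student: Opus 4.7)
The plan is to construct a randomized algorithm $\cP$ that, on oracle access to any $c \in \cC$, makes $t = O(\tau^{-4} \log n)$ queries and outputs $c$ with probability at least $1/2$. Combined with \cref{lem:oracleguess}, this immediately gives $\log|\cC| \le O(\tau^{-4} \log n \cdot \log|\Sigma|)$. First apply \cref{lem:lcc_matching} to decompose $[n] = T_1 \sqcup T_2$; for each $i \in T_2$ we have a matching $\cM_i$ of size $\ge \tau n/4$ such that every edge $(j,k) \in \cM_i$ deterministically yields $c_i$ via $\cR^i_{j,k}(c_j, c_k)$ on codewords. Define, for any seed $S \subseteq [n]$, the \emph{propagation closure} $R(S)$ as the smallest superset of $S$ closed under the rule ``if $(j,k) \in \cM_i$ and $\{j,k\} \subseteq R(S)$ then $i \in R(S)$''; the zero-error hypothesis ensures that the algorithm querying $c$ on $S$ and then propagating is always correct on $R(S)$. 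The task reduces to producing an $S$ of size $O(\tau^{-4} \log n)$ with $R(S) = [n]$.

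The centerpiece is a two-stage greedy procedure following the outline of \cref{sec:introproof}. \emph{Stage 1 (dense core):} iteratively delete any vertex of low induced degree from $T_2$ (where an edge is ``induced'' when its two endpoints and its label all survive) to obtain a subset $V' \subseteq T_2$ in which every vertex is incident to at least $\Omega(\tau^2 n)$ induced matching edges; a counting argument shows $|V'| \ge \Omega(\tau \cdot |T_2|)$, since each peel destroys only $O(\tau^2 n)$ edges while the initial edge count is $\Omega(\tau n \cdot |T_2|)$. \emph{Stage 2 (doubling):} starting from $S = \emptyset$, while $|R(S) \cap V'| \le |V'|/2$, every $i \in V' \setminus R(S)$ has $\ge \Omega(\tau^2 n)$ induced edges and none can be entirely inside $R(S)$ (else $i \in R(S)$), so double-counting (vertex, label) incidences of edges labeled by $V' \setminus R(S)$ with one endpoint in $R(S) \cap V'$ produces some $v \in V' \setminus R(S)$ incident to $\Omega(\tau^2 n)$ such edges; since each $\cM_i$ is a matching the labels are pairwise distinct and lie outside $R(S)$, so adding $v$ to $S$ propagates in all of them and at least doubles $|R(S) \cap V'|$. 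Hence $O(\log n)$ additions suffice before $R(S)$ covers $\Omega(\tau \cdot |T_2|)$ vertices.

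The remaining task is to iterate on $U = [n] \setminus R(S)$. The main obstacle is that once $R(S)$ is nonempty, the matchings $\cM_i$ for $i \in U$ need not be supported in $U$, so $U$ is not literally an LCC with the same parameters. The resolution is a dichotomy at every level: \emph{either} a constant fraction of $i \in U$ still admit matchings of size $\ge \tau n/8$ supported in $U$, in which case the two-stage argument recurses on $U$ with mildly degraded constants; \emph{or} a constant fraction of $i \in U$ have $\ge \Omega(\tau n)$ edges of $\cM_i$ with an endpoint in $R(S)$, in which case an averaging argument yields a single vertex whose addition to $S$ absorbs a $\poly(\tau)$ fraction of $U$ into $R(S)$. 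In either case one level adds $O(\tau^{-2}\log n)$ seeds and shrinks $|U|$ by a $\poly(\tau)$ factor, so $O(\tau^{-2})$ levels suffice for $R(S) \supseteq T_2$, giving $|S| = O(\tau^{-4} \log n)$. Finally, the $T_1$ vertices are absorbed by exploiting the smoothness of the distributions $\cD_i$: since $\Pr_{\cD_i}[j] \le 4/(\tau n)$, an extra $O(\tau^{-1} \log n)$ random samples together with the one-query correctors $\cR^i_j$ pin down every $c_i$ for $i \in T_1$ with high probability, which is absorbed into the overall bound.
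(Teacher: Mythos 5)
Your proposal follows the same skeleton as the paper's proof---clean-up to a dense core, exponential growth of $R(S)$ by repeatedly adding a high-incidence vertex, and a dichotomy at each level between ``recurse on a still-dense residual graph'' and ``one vertex absorbs many new coordinates''---so the approach is essentially the one the authors take (Lemmas~\ref{lemma:cleanup2}, \ref{lemma:exp growth}, \ref{lemma:too big is everything}, \ref{lemma:two cases}). However, there is a genuine gap in how you handle the set $T_1$, and it shows up already in your Stage~1 count.

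You peel inside $T_2$ and declare an edge ``induced'' only when \emph{both endpoints and the label} survive in the current $V'\subseteq T_2$, and then claim the initial induced edge count is $\Omega(\tau n\cdot|T_2|)$. But Lemma~\ref{lem:lcc_matching} only guarantees that each $\cM_i$ ($i\in T_2$) is a matching of size $\ge\tau n/4$ on $[n]$---its edges are free to land in $T_1$. If $|T_1|=\Omega(n)$, the number of matching edges with both endpoints inside $T_2$ can be arbitrarily small (even zero), so the lower bound that your peeling argument needs simply fails at level zero. Relatedly, you defer $T_1$ to the very end (``Finally, the $T_1$ vertices are absorbed\,\ldots''), whereas the propagation closure $R(S)$ needs to already contain $T_1$ for the matching edges with one or two endpoints in $T_1$ to be usable. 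The paper resolves both issues by first recovering $c|_{T_1}$ with random samples (Claim~\ref{claim:recovering_T1}) and then proving Claim~\ref{claim:graph theoretic} about $R(S\cup T_1)$; the reduction to $T_1=\emptyset$ is done by introducing dummy matchings $\{\cM_j\}_{j\in T_1}$ on $[n]$, so the peeling is carried out on all of $[n]$ and every vertex genuinely contributes $\ge\delta n$ edges to the initial count. You should adopt this order and the dummy-matching reduction; without it the Stage~1 counting is unjustified.

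Two smaller imprecisions, neither fatal but worth fixing: (i) the doubling in Stage~2 reaches $|R(S)|=\Omega(d)=\Omega(\tau^2 n)$, not $|V'|/2$---once $|R(S)|$ exceeds $d/2$ the ``at least $d-|R(S)|$ edges with labels in $B$'' estimate degenerates; (ii) ``shrinks $|U|$ by a $\poly(\tau)$ factor'' should be stated additively, as $\Omega(\tau^2 n)$ absolute vertices per level---that is what makes the level count $O(\tau^{-2})$ rather than $O(\log n)$. Finally, the $O(\tau^{-1}\log n)$ sample bound for $T_1$ requires a Bernstein-type tail (not just Hoeffding, which gives the paper's $O(\tau^{-2}\log n)$); either is absorbed into the final $O(\tau^{-4}\log n)$, but if you claim the tighter bound you should say how you get it.
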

\begin{proof}

We will construct a randomized algorithm $\cP$ such that for every $c\in \cC $, given oracle access to $c$, $\cP$ makes at most $O(\tfrac{1}{\tau^4} \cdot \log n)$ queries to $c$ and outputs $c$ with probability $\ge 1-1/n$. By Lemma~\ref{lem:oracleguess}, we get the required bound.

Let $[n]=T_1\cup T_2$ be partition of coordinates given by Lemma~\ref{lem:lcc_matching}. 

\begin{claim}\label{claim:recovering_T1}
Algorithm $\cP$ can learn $c|_{T_1}$ with probability $\ge 1-1/n$ by querying a uniformly random (sampled with repetitions) subset  $S$ of size $r=O(\tfrac{1}{\tau^2} \cdot \log n)$.
\end{claim}
\begin{proof}
 Let $S=\{Z_1,\cdots, Z_r\}$ where each $Z_i$ is a uniformly random element of $[n]$. By Lemma~\ref{lem:lcc_matching}, for every $u\in T_1$, we have a smooth distribution $\cD_u$ over $[n]$ and algorithms $\cR^u_v$ for every $v\in [n]$. Let's fix $u\in T_1$ and let $p_v=\Pr_{\cD_u}[v]$. By smoothness, $p_v\le \frac{4}{\tau n}$ for every $v\in [n]$.
 The algorithm $\cP$ estimates $c_u$ as follows: Define the weight of $\sigma$ to be $$W_\sigma = p_\phi\cdot \Pr[\cR^u_\phi=\sigma]+\frac{1}{r}\sum_{i=1}^r np_{Z_i}\cdot \Pr[\cR^u_{Z_i}(c_{Z_i})=\sigma]$$ and output the symbol with the maximum weight. We will show that $$\Pr[\cP\ \text{guesses}\ c_u\ \text{incorrectly}]\le \frac{1}{n^2}.$$ 
For $\sigma\in \Sigma$ and $v\in [n]\cup\{\phi\}$, let $f^\sigma_v=\Pr[\cR^u_v(c_v)=\sigma]$. The weight of $\sigma$ is given by $$W_{\sigma}=p_\phi f^\sigma_\phi+\frac{1}{r}\sum_{i=1}^r np_{Z_i}f^\sigma_{Z_i}.$$ We can calculate the expected value of the weight as $$\E[W_\sigma]=p_\phi f^\sigma_\phi+\E[np_{Z_1}f^\sigma_{Z_1}]=p_\phi \Pr[\cR^u_\phi(c_\phi)=\sigma]+\sum_{v\in [n]} p_v \Pr[\cR^u_v(c_v)=\sigma]=\Pr_{v\sim \cD_u}[\cR^u_v(c_v)=\sigma].$$ Therefore $W_{\sigma}$ is an unbiased estimator for $\Pr_{v\sim \cD_u}[\cR^u_v(c_v)=\sigma]$. 
Also $p_{Z_i}\le \frac{4}{\tau n}$ and $f^\sigma_{Z_i}\le 1$, so $np_{Z_i}f^\sigma_{Z_i}\le \frac{4}{\tau}.$ Applying Hoeffding's inequality, 
$$\Pr\left[|W_\sigma-\E[W_\sigma]|\ge \frac{1}{20} \right]\le \exp\left(-\Omega(r\tau^2)\right)\le 1/2n^2$$ when $r\gg \frac{1}{\tau^2}\log n$.
By Lemma~\ref{lem:lcc_matching}, $$\E[W_{c_u}]=\Pr_{v\sim \cD_u}[\cR^u_v(c_v)=c_u]\ge \frac{2}{3}.$$ Therefore, $\Pr[W_{c_u}\le \frac{2}{3}-\frac{1}{20}]\le 1/2n^2$.
Now we will show that no other symbol can have higher weight than $W_{c_u}$ except with probability $\frac{1}{2n^2}$. For this let us look at 
\begin{align*}
\sum_{\sigma\in \Sigma} W_\sigma &=\sum_{\sigma}p_\phi f^\sigma_\phi+ \frac{1}{r}\sum_{i=1}^r np_{Z_i}\sum_{\sigma}f^\sigma_{Z_i}\\
&=p_\phi \sum_\sigma \Pr[\cR^u_\phi=\sigma]+\frac{1}{r}\sum_{i=1}^r np_{Z_i}\sum_{\sigma}\Pr[\cR^u_{Z_i}(c_{Z_i})=\sigma]\\
&=p_\phi+\frac{1}{r}\sum_{i=1}^r np_{Z_i}
\end{align*}

So $\E[\sum_{\sigma\in \Sigma} W_\sigma]=p_\phi+\E[np_{Z_1}]=1$ and $np_{Z_i}\le \frac{4}{\tau}$. Therefore by Hoeffding's inequality applied again, we get $$\Pr\left[\left|\sum_{\sigma\in \Sigma} W_\sigma -1\right|\ge \frac{1}{20}\right] \le \exp\left(-\Omega(r\tau^2)\right)\le \frac{1}{2n^2}$$ when $r\gg \frac{1}{\tau^2}\log n$.
So with probability $\ge 1-\frac{1}{n^2}$, we have $W_{c_u}\ge \frac{2}{3}-\frac{1}{20}$ and $\sum_{\sigma\in \Sigma} W_\sigma \le 1+\frac{1}{20}$. Therefore with probability $\ge 1-\frac{1}{n^2}$, $c_u$ will be the symbol with maximum weight and the algorithm $\cP$ will guess $c_u$ correctly with probability $\ge 1-\frac{1}{n^2}$. By union bound, we get that $\cP$ can guess $c_u$ correctly for all $u\in T_1$ with probability $\ge 1-\frac{1}{n}$.
\end{proof}

We will now show that after learning $c|_{T_1}$, $\cP$ can now learn $c|_{T_2}$ by querying a further $O_\tau(\log n)$ coordinates from $c$ and this process will be deterministic i.e. no further randomness is needed. Define $R(S)$ to be the set of coordinates of $c$ that can be recovered correctly given $c|_S$. In Claim~\ref{claim:recovering_T1}, we have shown that if $S$ is a randomly chosen subset of size $O_\tau(\log n)$, then $T_1\subseteq R(S)$ with probability $\ge 1-\frac{1}{n}$. From now on we assume that $\cP$ has already recovered coordinates of $T_1$ correctly i.e. $T_1\subseteq R(S)$.
If $T_2\subseteq R(S)$ then we are done, the algorithm $\cP$ can output the entire $c$ with probability $\ge 1-\frac{1}{n}$. So we can assume that $T_2\nsubseteq R(S)$. Our goal is to show that we can add a further $O(\poly(1/\tau) \cdot \log n)$ vertices to $S$ and have $R(S)=V=T_1\cup T_2$. We show that this is indeed the case in the next section by proving the following claim, which completes the proof.
\begin{claim}\label{claim:graph theoretic}
There exists a set $S$ of size $O((1/\tau)^4 \cdot \log n)$ such that $R(S \cup T_1) = V$.	
\end{claim}
\renewcommand{\qedsymbol}{\sc q.e.d.}
\end{proof}

\subsection{Proof of Claim~\ref{claim:graph theoretic}}\label{sec:graph}
Claim~\ref{claim:graph theoretic} is purely graph theoretical. 
Let $G=(V,E)$ be the graph with $V=[n]=T_1\cup T_2$ and $E=\cup_{i\in T_2} \cM_i$ where $\cM_i$ are partial matchings of size at least $(\tau/4) n$ given by Lemma~\ref{lem:lcc_matching}. Let $\delta:=\tau/4$. We will label each edge in $E$ with a label in $T_2$ indicating which matching it belongs to. We can have parallel edges in $E$, but they will have different labels since they belong to different matchings. Recall that $R(S)$ is the set of coordinates of $c$ that can be inferred from $c|_S$. Lemma~\ref{lem:lcc_matching} implies the following closure property for $R(S)$: if $(i,j)\in \cM_k$ and $i,j\in R(S)$ then $k\in R(S)$.
%We are given a graph $G = (V,E)$ where $V = [n] = T_1 \cup T_2$, and each edge is marked with a label from $T_2$. We may have parallel edges in $E$ with different labels, but for every $i\in T_2$, the edges labeled with $i$, denoted $\cM_i$, form a partial matching on $V$ of size at least $\delta n$, for $\delta := \tau/4$.
Next, we define $R(S)$  formally based on the graph $G$ using this closure property.
\begin{definition}
Let $G = (V,E)$ as above. Let $S\subseteq V$.
We define the set $R_G(S) \subseteq V$ to be the smallest set of vertices such that:
\begin{enumerate}
	\item $S \subseteq R_G(S)$
	\item For all $i,j \in R_G(S)$ and $k\in [n]$, if $(i,j) \in \cM_k$, then $k \in R_G(S)$. (In words, if there exists an edge $(i,j)$ in the graph $G$ labeled with $k$ and both $i$ and $j$ are in $R_G(S)$, then so is $k$.) 
\end{enumerate}
%	$R_G(S)\subseteq V$ is the smallest set containing $S$ such that for any $i\in R_G(S), j \in R_G(S)$, any edge between $i$ and $j$ in the graph $G$ is labeled with a label in $R_G(S)$.
\end{definition}
	
(When the context is clear, we will use $R(S)$ instead of $R_G(S)$.)
%$R_G(S)$ can be defined greedily. Start with $R := S$. Repeat the following propagation process until it does not change the set $R$. Take $T$ to be the set of labels of all edges between vertices in $R$, and update $R:= R \cup T$.
Our goal is to show that in any graph $G$ as above, there exists a set $S\subseteq V$ of size $\poly(1/\delta) \cdot \log(n)$ such that $R_G(S \cup T_1) = V$.
As a first step, we get rid of the set $T_1$, by showing that proving the claim in the case $T_1 = \emptyset$ implies Claim~\ref{claim:graph theoretic} for any other set.
To see that observe that if we take $G'$ to be the union of $G$  with a collection of partial matching $\{\cM_{j}\}_{j\in T_1}$, then $R_{G'}(S) \subseteq R_{G}(S \cup T_1)$ for any set $S \subseteq V$.
Thus, it suffices to introduce dummy matchings $\{\cM_{j}\}_{j\in T_1}$ for each $\cM_j$ of  size $\delta n$, and prove that there exists a set $S$ of size 
$\poly(1/\delta) \cdot \log(n)$ such that $R_{G'}(S) = V$.

\begin{claim}[Claim~\ref{claim:graph theoretic}, case $T_1=\emptyset$, restated]\label{claim:T1 empty}
Let $G = (V,E)$ be a graph with $V=[n]$ and $E = \cM_1 \cup \cdots \cup \cM_n$ where each $\cM_i$ is a partial matching of size at least $\delta n$. Then, there exists a subset $S \subseteq V$ of size $O((1/\delta)^4 \cdot \log n)$  such that $R_G(S) = V$.
\end{claim}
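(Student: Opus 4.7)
The plan is to construct $S$ greedily, one vertex at a time: at every step the next vertex will be chosen so that $|R_G(S)|$ grows by a multiplicative factor of $1+\Omega(\poly(\delta))$ while it is small, and by an additive $\Omega(\poly(\delta)\cdot n)$ chunk once it is large. After $O((1/\delta)^4 \log n)$ such steps, $R_G(S)$ will cover all of $V$.

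The first ingredient will be a cleanup lemma. Because $\sum_w |\cM_w| \geq \delta n \cdot n$ counts labelled triples $(u,v,w)$ with $(u,v) \in \cM_w$, a standard densest-subgraph / greedy-peeling argument locates a non-empty $U \subseteq V$ in which every vertex participates in $\Omega(\delta n)$ such triples having all three coordinates in $U$. Separating the role of ``endpoint'' from ``label'' (at the cost of an additional $\poly(\delta)$ factor) yields the refined conclusion I will actually need: every $u \in U$ is an endpoint of at least $d := \poly(\delta)\cdot n$ labelled edges $(u,u') \in \cM_w$ with $u', w \in U$.

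The second ingredient is the propagation step inside $U$. Suppose $A = R_G(S) \cap U$ is a strict subset of $U$ with $|A| \leq d/2$. For each $u \in A$, among the $d$ labelled edges $(u,u') \in \cM_w$ with $u', w \in U$ guaranteed by the min-degree condition, at most $|A|$ can have their label $w \in A$, so at least $d - |A| \geq d/2$ of them have $w \in U \setminus A$. The crucial observation is that whenever $w \notin A$ the other endpoint $u'$ also lies in $U \setminus A$: otherwise $\{u,u'\} \subseteq A$ would force $w \in R_G(S)$ by closure. Summing over $u \in A$ yields at least $|A| \cdot d/2$ ``good'' triples $(u,u',w)$ with $u \in A$ and $u', w \in U \setminus A$, and averaging over $u'$ produces some $v \in U \setminus A$ incident to $\Omega(|A| d / |U \setminus A|)$ such triples, whose labels $w$ are pairwise distinct and all lie in $U \setminus A$. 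Adding $v$ to $S$ immediately brings $v$ together with all those $w$'s into $R_G(S)$, giving multiplicative growth by $1 + \Omega(d/n) = 1 + \Omega(\poly(\delta))$. Iterating $O(\log n / \poly(\delta))$ times pushes $R_G(S)$ past a $\poly(\delta)$ fraction of $V$.

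For the iteration phase, set $W = V \setminus R_G(S)$. For every $w \in W$ the closure property prohibits any edge of $\cM_w$ from lying entirely inside $R_G(S)$, so each $\cM_w$ splits as $\cM_w^{\mathrm{in}} \cup \cM_w^{\mathrm{cross}}$ with $\cM_w^{\mathrm{in}} \subseteq W \times W$ and $\cM_w^{\mathrm{cross}} \subseteq R_G(S) \times W$. Depending on which part dominates on average I will apply one of two moves: either (i) some single $v \in W$ is incident to $\Omega(\delta n)$ cross edges carrying pairwise distinct $W$-labels, so adding $v$ to $S$ grows $R_G(S)$ additively by $\Omega(\delta n)$; or (ii) the subset of $w$'s on which $|\cM_w^{\mathrm{in}}|$ is $\Omega(\delta n)$ is large, in which case the restricted graph on $W$ still satisfies the claim's hypothesis with a $\poly(\delta)$ parameter and the cleanup-plus-doubling phase reapplies. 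The hard part will be the amortized bookkeeping required to bound the total number of added vertices by $O((1/\delta)^4 \log n)$ rather than $O(\log^2 n)$; this should follow by charging each doubling phase to the multiplicative growth of $|R_G(S)|$ it is responsible for and noting that case (i) absorbs a constant fraction of the current $|W|$ in $O(1/\poly(\delta))$ additive steps per outer round.
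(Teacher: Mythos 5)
Your proposal is essentially the same argument as the paper's, in the same three-ingredient structure: a greedy peeling (clean-up) lemma producing a label-restricted subgraph of high minimum degree, a cut-counting step showing that propagation inside that subgraph grows $R_G(S)$ by a multiplicative $1+\Omega(\poly(\delta))$ factor per added vertex, and an outer loop alternating between a single-vertex additive-growth case and a ``restricted graph still looks like an LCC'' case. The key observation you isolate --- that when the label $w$ lies outside $R(S)$, the other endpoint of any $\cM_w$-edge touching $R(S)$ must also lie outside $R(S)$, and that the labels of the edges hitting the averaged-out vertex $v$ are automatically distinct --- is exactly the mechanism the paper uses.

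The one piece you are missing, and the reason your worry about a possible $\log^2 n$ bound is a red herring, is the paper's Lemma~\ref{lemma:too big is everything}: once $|R(S)| > (1-\delta)n$, every partial matching $\cM_v$ (of size $\geq\delta n$) is forced to contain an edge entirely inside $R(S)$, so $R(S)=V$ automatically. Consequently the outer loop only needs to run while $|V\setminus R(S)|\geq\delta n$. This keeps the ``shrinking universe'' from ever becoming small: the residual graph on $V' := V\setminus R(S)$ always has $|V'|\geq\delta n$, so after your clean-up lemma the min degree $d$ is $\Omega(\delta^2 |V'|)=\Omega(\delta^3 n)$ relative to the \emph{original} $n$, not merely relative to the shrinking $|V'|$. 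Hence \emph{every} outer round, whether it executes your case (i) or case (ii), enlarges $R(S)$ by $\Omega(\poly(\delta)\cdot n)$ vertices and there can be only $O(\poly(1/\delta))$ outer rounds --- the count is a function of $\delta$ alone, with no dependence on $n$. No amortization is needed; the paper simply multiplies $O(1/\delta^2)$ rounds by $O((1/\delta)^2\log n)$ elements added per round, giving $O((1/\delta)^4\log n)$.

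Two smaller points. First, your case (ii) claims the restricted graph on $W$ ``still satisfies the claim's hypothesis,'' but after discarding cross edges not all matchings remain large --- only all but a $\poly(\delta)$ fraction of them. You therefore need to state and prove the clean-up lemma in the more robust form (as the paper's Lemma~\ref{lemma:cleanup2} does): assume all but $0.1\delta|V|$ matchings have size at least $0.9\delta|V|$, rather than assuming every matching is large. This is a routine strengthening but it must be built in from the start, since the clean-up lemma is invoked on the residual graph. Second, the clean-up lemma's greedy peeling must remove, at each step, both the minimal-degree vertex \emph{and} all edges labeled by it; when lower-bounding the surviving edge count it is essential that the edges you track are labeled only by the final target set $V^*$ (so that labels of removed vertices never contribute), which makes the per-step edge loss at most the removed vertex's degree. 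Your phrase ``separating the role of endpoint from label (at the cost of an additional $\poly(\delta)$ factor)'' suggests you sense this subtlety, but it deserves to be made explicit.
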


From here henceforth we assume (without loss of generality) that $T_1 = \emptyset$ and $T_2 = [n]$, and prove Claim~\ref{claim:T1 empty}.
The following lemma tells us that we can find a subgraph $G'$ of $G$ such that each vertex in $G'$ has high degree. Note that the lemma finds a subgraph restricted to a set of vertices $V'$, and also restricted to the set of edges labeled with $V'$. %This is the main ``twist''.

We shall use this lemma inductively. 
During induction, we will remove some edges from the matchings. Thus, instead of asserting that all matchings are of size at least $\delta |V|$, we assume that all but $0.1 \delta |V|$ of the matchings have at least $0.9 \delta |V|$ edges.

\begin{lemma}[Clean-Up Lemma]\label{lemma:cleanup2}
	Let $G = (V,E)$ be a graph with a finite set of vertices $V$ and $E = \bigcup_{i\in V}\cM_i$, where each $\cM_i$ is a partial matching on $V$. Assume all but $0.1 \delta |V|$  of the matchings $\cM_i$ have size at least $0.9 \delta |V|$.
	Then, there exists a subset $V' \subseteq V$ of size at least $\delta \cdot |V|$ so that
	the graph $G' = (V',E')$ where $E' = \bigcup_{i\in V'} {\cM_i \cap (V' \times V')}$ has minimal degree at least $(\delta^2/4) \cdot |V|$.
\end{lemma}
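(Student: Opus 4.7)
The plan is to analyze a greedy pruning algorithm: initialize $V_0 := V$, and at each step $t$, if there exists some $v \in V_t$ with $\deg_{G_t}(v) < (\delta^2/4)|V|$, set $V_{t+1} := V_t \setminus \{v\}$; otherwise halt and output $V' := V_t$. By construction the output satisfies the required minimum-degree condition, so it remains to lower bound $|V'|$. Since $|V_t|$ decreases by exactly one per step starting from $|V_0| = |V|$, my strategy is to identify a ``forbidden zone'' for the normalized quantity $\alpha := |V_t|/|V|$ that no reachable state can occupy; if this zone lies strictly above $\alpha = \delta$, the algorithm must halt before entering it, giving $|V'| > \delta|V|$.

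The forbidden zone will come from two competing bounds on the edge count $|E_t| = \sum_{i \in V_t} |\cM_i \cap (V_t \times V_t)|$. The upper bound $|E_t| \leq |V_t|^2/2$ is structural, since each $\cM_i \cap (V_t \times V_t)$ is a partial matching on $V_t$. For the lower bound, I will observe that every edge of $\cM_i$ missing from $\cM_i \cap (V_t \times V_t)$ was lost precisely when one of its endpoints was removed by the algorithm; and since each $\cM_i$ is a matching, removing $v_s$ at step $s$ kills at most one edge from each $\cM_i$, for a total of exactly $\deg_{G_s}(v_s) < (\delta^2/4)|V|$ edges lost across all surviving matchings. Summing these losses over $s < t$ and combining with the hypothesized bound $\sum_{i \in V_t} |\cM_i| \geq (|V_t| - 0.1\delta|V|)\cdot 0.9\delta|V|$ yields
\[
|E_t| \;\geq\; (|V_t| - 0.1\delta|V|)(0.9\delta|V|) \;-\; (|V| - |V_t|)(\delta^2/4)|V|.
\]
The condition ``lower bound strictly exceeds $|V_t|^2/2$'' reduces to a concave quadratic inequality in $\alpha$ whose positivity set is an open interval $(\alpha_-, \alpha_+)$. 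A direct check at $\alpha = \delta$ gives $\text{LB} - \text{UB} = 0.06\,\delta^2|V|^2 + \Theta(\delta^3|V|^2) > 0$, so $\delta \in (\alpha_-, \alpha_+)$ and in particular $\alpha_+ > \delta$. For $|V|$ sufficiently large (so that the interval of forbidden $\alpha$-values contains at least one value of the form $k/|V|$), the algorithm is forced to terminate with $|V'| > \delta|V|$.

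The main obstacle I anticipate is the clean setup of the charging scheme in the lower bound: each lost edge of $\cM_i$ must be attributed to exactly one vertex-removal event so that the global shrinkage sums tightly to $\sum_{s<t} \deg_{G_s}(v_s)$, with no double counting and no under-counting of edges whose label vertex itself is later removed. Once that is in place, the rest is routine arithmetic confirming that the constants $0.1\delta$, $0.9\delta$, and $\delta^2/4$ in the hypothesis and conclusion have been chosen precisely so as to produce a non-empty forbidden zone straddling $\alpha = \delta$.
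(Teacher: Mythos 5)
Your proposal is correct and follows essentially the same route as the paper's own proof: the identical greedy pruning of below-threshold-degree vertices, the identical structural upper bound $|V_t|^2/2$ coming from $|V_t|$ partial matchings on $V_t$, and the identical charging argument that lower bounds the surviving-label edge count by subtracting at most $(\delta^2/4)|V|$ per removal step. The only cosmetic differences are that you count the number of removal steps exactly as $|V|-|V_t|$ where the paper uses the looser bound $|V|$, and you keep the ratio $\alpha$ symbolic before evaluating at $\alpha=\delta$, where the paper substitutes $|V^*|=\delta|V|$ from the outset; both handle the integrality of $\delta|V|$ equally informally.
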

\begin{proof}
We find the set $V'$ greedily. Let $\delta' := \delta^2/4$. Initialize $V' = V$.
If the minimum degree in the remaining graph on $V'$ is at least $\delta' \cdot |V|$ then we stop.
Otherwise, remove the vertex $i\in V'$ with minimal degree, and remove all edges labeled $i$.
We repeat this process until no vertices of degree smaller than $\delta' \cdot |V|$ exist.

If the process stopped when $|V'| \ge \delta |V|$ then we are done.
We are left to show that the process cannot proceed past this point.
Let's assume by contradiction that we can continue the process after this point. As we decrease the size of $V'$ by one in each iteration, we must reach at a certain point of the process to a set of vertices  $V' = V^{*}$ of size exactly $\delta |V|$. 
Denote by $$E^{*}(V') := \bigcup_{i\in V^{*}} {\cM_i \cap (V' \times V')}.$$
Next, we upper and lower bound $|E^{*}(V^*)|$ to derive a contradiction.

The upper bound  $|E^{*}(V^{*})| \le |V^{*}|\cdot |V^{*}|/2$ follows since the edges $E^{*}(V^{*})$ form a collection of $|V^{*}|$ partial matchings on $V^{*}$. 
To lower bound $|E^{*}(V^{*})|$ we use the properties of the greedy process. 
The initial size of the set $E^{*}(V')$ (when $V' = V$) is at least $0.9 \delta |V| \cdot (|V^{*}| - 0.1 \delta |V|) \ge 0.9^2 \delta^2\cdot |V|^2$. In every iteration, we remove at most $\delta' |V|$ edges from this set of edges. As there are at most $|V|$ steps, we are left with at least $0.9^2 \delta^2 |V|^2 - \delta' |V|^2$ edges, i.e., $|E^{*}(V^{*})| \ge 0.9^2\delta^2 |V|^2 - \delta' |V|^2$.
Combining both upper and lower bounds on $|E^{*}(V^{*})|$ gives
\[
\frac{1}{2} \cdot \delta^2 \cdot |V|^2 \ge |E^{*}(V^{*})| \ge (0.9^2 \delta^2 - \delta') \cdot |V|^2 = (0.9^2\delta^2 - \delta^2/4) \cdot |V|^2
\]
which yields a contradiction since $1/2 < 0.9^2 - 1/4$.
\end{proof}

\begin{lemma}[Exponentially growing a set of known coordinates]\label{lemma:exp growth}
	Let $G = (V,E)$ be a graph with $V$ and $E = \bigcup_{i\in V} \cM_i$  such that each $v\in V$ has degree at least $d$.
	Then, there exists a subset $S \subseteq V$ of size at most $O((|V|/d) \cdot \log |V|)$ with $|R(S)|\ge d/2$.
\end{lemma}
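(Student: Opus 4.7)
The plan is to build $S$ greedily one vertex at a time. Write $n := |V|$. Initialize $S = \{v_0\}$ for any $v_0 \in V$, so $|R(S)| \geq 1$, and iterate: as long as $|R(S)| < d/2$, append to $S$ a vertex $u \in V \setminus R(S)$ chosen so that $|R(S)|$ grows by a multiplicative factor of at least $1 + d/(2n)$. Granting the existence of such a $u$, induction gives $|R(S)| \geq (1 + d/(2n))^{|S|-1}$, and since $\log(1 + d/(2n)) \geq d/(4n)$ for $d \leq n$, the loop exits once $|S| - 1 \geq \log(d/2)\cdot (4n/d) = O((n/d)\log n)$, which is the target bound on $|S|$.

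The one-step growth combines a closure-based edge count with an averaging argument. Fix $v \in R(S)$ and consider its $\geq d$ incident edges; their labels are pairwise distinct, because each matching $\cM_\ell$ contributes at most one edge at $v$. Of these labels, at most $|R(S)|$ can lie inside $R(S)$, so at least $d - |R(S)|$ of the edges at $v$ have a label $\ell \notin R(S)$. By the contrapositive of the closure condition defining $R(\cdot)$, any such edge must have its second endpoint in $V \setminus R(S)$; otherwise both endpoints would lie in $R(S)$ and $\ell$ itself would have been pulled into $R(S)$. Summing over $v \in R(S)$ counts each cut edge exactly once, giving at least $|R(S)|(d - |R(S)|)$ edges from $R(S)$ to $V \setminus R(S)$ whose labels lie outside $R(S)$. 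Averaging over the at most $n$ far-side endpoints, some $u \in V \setminus R(S)$ is incident to at least $|R(S)|(d - |R(S)|)/n$ such edges, and their distinct labels all enter $R(S \cup \{u\})$ when we add $u$ to $S$, along with $u$ itself.

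Hence, whenever $|R(S)| < d/2$, the chosen $u$ satisfies
\[
|R(S \cup \{u\})| \;\geq\; |R(S)| \;+\; \max\!\left(1,\; \tfrac{|R(S)|(d - |R(S)|)}{n}\right) \;\geq\; |R(S)|\left(1 + \tfrac{d}{2n}\right),
\]
where the inner max handles the regime $|R(S)| \leq 2n/d$ in which the averaging estimate drops below $1$ but the single guaranteed new label already supplies the required multiplicative slack; outside that regime $d - |R(S)| \geq d/2$ yields the bound directly. The main technical point in the proof is the closure-based edge-counting of the second paragraph, and in particular the observation that the labels of ``missing'' edges must be distinct and outside $R(S)$ so that a single new vertex simultaneously unlocks all of them. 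With that in hand, the rest is a routine multiplicative-growth recurrence that terminates within $O((n/d)\log n)$ iterations, completing the proof.
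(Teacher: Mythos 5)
Your proof is correct and takes essentially the same approach as the paper's: start from a single vertex, use the fact that edges incident to any vertex carry distinct labels to lower-bound the number of cut edges whose labels lie outside $R(S)$, average over the far side to find a vertex $u$ that unlocks $\Omega(|R(S)|d/n)$ new coordinates, and iterate the resulting $(1+d/(2n))$-fold growth. The only cosmetic difference is that you explicitly handle the regime where $|R(S)|d/(2n)<1$ via the $\max(1,\cdot)$ term, whereas the paper leaves that (trivial) integrality point implicit.
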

\begin{proof}
	%We apply Lemma~\ref{lemma:cleanup} to $G$ to get a subgraph $G' = (V',E')$ with set of vertices $V'$ of size $\ge (\delta/2)n$, $E'=\bigcup_{i\in V'} {\cM_i \cap (V' \times V')}$, and minimal degree at least $\delta' \cdot n$ where $\delta' = (\delta^2/4)$.
		
We pick the set $S \subseteq V$ iteratively, picking one element in each step. We start with $S = \{v\}$ for some arbitrary $v\in V$.

Assume we picked $t$ elements so far for the set $S$.
%Let $R(S)$ be the coordinates that are determined by knowing $S$.
If $|R(S)| \ge d/2$, then we are done.
Otherwise, by the definition of $R(S)$, for any $i \in V \setminus R(S)$, none of the edges in the matching $\cM_i$ is inside $R(S)$.
We wish to show that there exists an  $i \in V \setminus R(S)$ with many edges into $R(S)$ marked with labels outside $R(S)$.
Then, we will add $i$ to $S$, which will reveal a lot of new coordinates.

For two disjoint sets of vertices $A,B \subseteq V$ we denote by $E(A,B)$ the set of edges between $A$ and $B$ in the graph $G$. If $A$ consists of one element, i.e., $A=\{a\}$ we denote $E(a,B) = E(A,B)$.
Let $A = R(S)$. Let $B = V \setminus A$. We have
\begin{equation}\label{eq:Edges between A and B}\left|E(A,B) \cap \bigcup_{i\in B} \cM_{i}\right| = \sum_{a\in A}\left|E(a,B) \cap \bigcup_{i\in B} \cM_{i}\right|
= \sum_{a\in A}\left|E(a,V \setminus \{a\}) \cap \bigcup_{i\in B} \cM_{i}\right|
\end{equation}
where the last equality follows since there are no edges labeled $i \in B$ between any two vertices in $A$.
For each $a \in A$ there are at least $d$ edges touching $a$ and at most $|A|$ of them appeared in $\bigcup_{i\in A}\cM_i$, hence 
$
\left|E(a,V \setminus \{a\}) \cap \bigcup_{i\in B} \cM_{i}\right| \ge d - |A| \ge d/2.
$
Plugging this estimate to Eq.~\eqref{eq:Edges between A and B} gives 
\[
\left|E(A,B) \cap \bigcup_{i\in B} \cM_{i}\right| \ge  |A| \cdot d/2\;.
\]
By averaging there exists a vertex $b\in B$ with at least $|A| \cdot  \frac{d}{2|V|}$ edges to $A$ labeled with $B$.
So as long as $|A|=|R(S)| \le d/2$ we are extending the set  $R(S)$ by at least $|R(S)| \cdot\frac{d}{2|V|}$ elements, i.e. by a multiplicative factor of $(1+\frac{d}{2|V|})$. Hence, after $t$ iterations, either $|R(S)| \ge (1+\frac{d}{2|V|})^t$ or $|R(S)| \ge d/2$.
Taking $t = O(\frac{|V|}{d} \cdot \log|V|)$ gives that after at most $t$ iterations $|R(S)| \ge d/2$. 
\end{proof}

\begin{lemma}[Covering $1-\delta$ fraction of the coordinates implies covering all coordinates]\label{lemma:too big is everything}
	Let $G = (V,E)$ be a graph with $V=[n]$ and $E = \cM_1 \cup \cM_2 \cup \ldots \cup \cM_n$ and each $\cM_i$ is a partial matching of size at least $\delta n$.
	Let $S \subseteq V$. If $|R(S)| > (1-\delta)n$, then $R(S) = V$.
\end{lemma}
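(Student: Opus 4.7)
The plan is to prove the lemma by contradiction: assume there is some vertex $i \in V \setminus R(S)$ and derive that its matching $\cM_i$ must force $i \in R(S)$.

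First, I would observe that the hypothesis $|R(S)| > (1-\delta)n$ is equivalent to saying that $|V \setminus R(S)| < \delta n$. Pick any putative vertex $i \in V \setminus R(S)$. By assumption $\cM_i$ is a (partial) matching on $V \setminus \{i\}$ containing at least $\delta n$ edges, and crucially its edges are pairwise vertex-disjoint.

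The key counting step is then immediate. An edge in $\cM_i$ can have at most two endpoints, but because the matching is vertex-disjoint, the number of edges of $\cM_i$ that touch the set $V \setminus R(S)$ is at most $|V \setminus R(S)| < \delta n$. Since $|\cM_i| \ge \delta n$, there must exist at least one edge $(j,k) \in \cM_i$ both of whose endpoints lie in $R(S)$. By the closure property in the definition of $R_G(S)$, the presence of the edge $(j,k)$ labeled $i$ with $j,k \in R(S)$ forces $i \in R(S)$, contradicting our choice of $i \in V \setminus R(S)$.

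Hence no such $i$ can exist, so $V \setminus R(S) = \emptyset$ and $R(S) = V$. There is no real technical obstacle here; the only thing to be careful about is using the right form of the closure (that an edge labeled $i$ with both endpoints already in $R(S)$ pulls $i$ into $R(S)$), which is exactly condition (2) in the definition of $R_G(S)$ together with Lemma~\ref{lem:lcc_matching}(2) guaranteeing that $\cM_i$ really is a matching of size at least $\delta n$ on $V \setminus \{i\}$.
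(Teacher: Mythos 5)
Your proof is correct and essentially identical to the paper's: both use the observation that since $\cM_i$ is a vertex-disjoint matching of size at least $\delta n$ and $|V \setminus R(S)| < \delta n$, some edge of $\cM_i$ must avoid $V \setminus R(S)$ entirely, and then apply the closure property of $R(S)$. The only cosmetic difference is that you phrase it as a proof by contradiction whereas the paper argues directly that every $i \in V$ lands in $R(S)$.
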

\begin{proof}
	Let $v \in V$. We show that there is an edge inside $R(S)$ marked $v$. Indeed, there are at least $\delta n$ edges labeled $v$ and they form a partial matching. If $|V \setminus R(S)| < \delta n$, one of these edges do not touch $(V\setminus R(S))$, i.e., it is an edge connecting two vertices in $R(S)$.
\end{proof}

\begin{lemma}[Two Cases]\label{lemma:two cases}
	Let $G = (V,E)$ be a graph with $V=[n]$ and $E = \cM_1 \cup \cM_2 \cup \ldots \cup \cM_n$ where each $\cM_i$ is a partial matching of size at least $\delta n$.
	Let $S \subseteq V$. Assume $|R(S)| \le (1-\delta)n$.
	Then, either
	\begin{enumerate}
		\item There exists an $i \in V\setminus R(S)$ such that $|R(S \cup \{i\})| \ge |R(S)| + 0.01 \cdot \delta^2 \cdot n$.
		\item In the graph $G' = (V',E')$ with $V' = V \setminus R(S)$ and $E'=\bigcup_{i\in V'} {\cM_i \cap (V' \times V')}$ all but at most $0.1 \delta \cdot |V'|$ of the matchings have at least $0.9\delta\cdot n$ edges.
	\end{enumerate}
	\end{lemma}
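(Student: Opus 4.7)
My plan is to show the contrapositive: if case~2 fails then case~1 holds. Let $V' = V \setminus R(S)$ and, for each $i \in V'$, partition the edges of $\cM_i$ according to which side of the cut $(R(S),V')$ their endpoints lie on. Let $a_i$ be the number of edges of $\cM_i$ with both endpoints in $V'$ and $b_i$ the number with one endpoint in $V'$ and one in $R(S)$; the number with both endpoints in $R(S)$ is $0$, because the closure property would then put $i$ into $R(S)$, contradicting $i \in V'$. Hence $a_i + b_i = |\cM_i| \geq \delta n$.

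Assume case~2 fails, so at least $0.1 \delta |V'|$ indices $i \in V'$ satisfy $a_i < 0.9 \delta n$, and therefore $b_i > 0.1 \delta n$. Summing over just these ``bad'' labels yields
\[
\sum_{i \in V'} b_i \;\geq\; 0.1 \delta |V'| \cdot 0.1 \delta n \;=\; 0.01\,\delta^2\, |V'|\, n.
\]
The left-hand side counts triples (edge $e$, label $i$ of $e$, endpoint of $e$ in $V'$), where each edge $e$ appears exactly once because it has a single label and exactly one endpoint in $V'$. Averaging the count over the $V'$-endpoints, there exists $u \in V'$ that is incident to at least $0.01\,\delta^2\, n$ edges whose other endpoint lies in $R(S)$ and whose label lies in $V'$.

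The key observation is that all these edges have distinct labels: for any label $i$, the matching $\cM_i$ contributes at most one edge incident to $u$. Thus we obtain $\geq 0.01\,\delta^2\, n$ distinct labels $i \in V'$ such that the edge $(u,w_i)$ with $w_i \in R(S)$ is labeled $i$. Now consider $S \cup \{u\}$: we have $u \in R(S \cup \{u\})$ trivially, and each $w_i \in R(S) \subseteq R(S \cup \{u\})$, so by the closure property each of these $\geq 0.01\,\delta^2\, n$ labels $i$ also lies in $R(S \cup \{u\})$. Since every such $i$ is in $V' = V \setminus R(S)$, this gives $|R(S \cup \{u\})| \geq |R(S)| + 0.01\,\delta^2\, n$, which is exactly case~1. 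The only subtle point in the argument is ensuring that the edges counted by $b_i$ contribute only new elements to $R(S \cup \{u\})$, and that is guaranteed by the labels lying in $V'$ together with the matching property making the labels distinct; no routine calculation beyond the averaging step above is needed.
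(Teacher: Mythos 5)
Your proof is correct and uses essentially the same argument as the paper, just phrased as the contrapositive: the paper assumes the negation of case~1 (every $i \in V \setminus R(S)$ has at most $0.01\delta^2 n$ cut edges labeled in $V'$) and derives case~2, while you assume the negation of case~2 and derive case~1. The core ingredients—the closure property forcing $\cM_i$ to have no edges inside $R(S)$ for $i \notin R(S)$, the count of cut edges labeled by $V'$, the averaging over $V'$-endpoints, and the observation that the matching structure forces the discovered labels to be distinct—are identical.
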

\begin{proof}
Recall that the labels of edges incident to any vertex $i$ are distinct, since the graph is a union of partial matchings.
Denote by $A = R(S)$ and $B = V \setminus R(S)$.
	Assume for any $i \in B$ there are at most $0.01 \delta^2 \cdot n$ edges to $A$ labeled with labels in $B$. (Otherwise, extend $S$ by $i$ and get $|R(S \cup \{i\})| \ge |R(S)| + 0.01 \delta^2 \cdot n$.)
	Then, there are at most $0.01 \delta^2 \cdot n \cdot |B|$ edges in the cut $(A,B)$ with labels in $B$. By definition of $A=R(S)$, there are no edges between $A$ and $A$ labeled with $B$. 
	Thus, at most $0.01 \delta^2 n \cdot |B|$ edges are missing from the matchings labeled by $B$ if we restrict to edges between $B$ and $B$. Hence, at most $0.1 \delta \cdot |B|$ of the matchings may miss more than $0.1 \delta \cdot n$ of their edges.
\end{proof}

We are now ready to prove Claim \ref{claim:T1 empty}.
%\begin{theorem}
%		Let $G = (V,E)$ be a graph with $V=[n]$ and $E = \cM_1 \cup \cM_2 \cup \ldots \cup \cM_n$ where each $\cM_i$ is a partial matching of size at least $\delta n$.
%		Then, there exists a subset $S \subseteq V$ of at most $O((1/\delta)^4 \cdot \log(n))$ coordinates such that $R(S) = V$.
%\end{theorem}

\begin{proof}[Proof of Claim	 \ref{claim:T1 empty}]
	Initialize $S := \emptyset$.
We repeat the following process. 	While $R(S) \neq V$, check if there exists $i \in V \setminus R(S)$ such that $|R(S \cup \{i\})| \ge |R(S)| + 0.01 \delta^2 n$. 
	We have two cases:
	\begin{enumerate}
		\item If such an $i$ exists, update $S := S \cup \{i\}$.
		\item Else, let $G' = (V',E')$ where $V' =  V \setminus R(S)$ and $E'=\bigcup_{i\in V'} {\cM_i \cap (V' \times V')}$. Let $M'_i := \cM_i \cap (V' \times V')$. 
	By Lemma~\ref{lemma:too big is everything}, $|V'| \ge \delta n$.
	By Lemma~\ref{lemma:two cases}, all but at most $0.1 \delta |V'|$ of the matchings $M'_i$ for $i\in V'$ have at least $0.9 \delta n$ edges.
	Denote by $\delta' = 0.9 \delta n/|V'| \ge \delta$.
	We apply Lemma~\ref{lemma:cleanup2} on $G'$ to get  a subgraph $G'' = (V'',E'')$ defined by a subset $V''$ of size $\Omega(\delta'|V'|)$ and $E'' = \bigcup_{i\in V''}{\cM_i \cap (V'' \times V'')}$ with minimal degree $d  = \Omega((\delta')^2 \cdot |V'|) \ge \Omega(\delta^2 n)$.
	We apply Lemma~\ref{lemma:exp growth} on $G''$ to get a set $S''\subseteq V''$ of size $O(\log |V''| \cdot (|V''|/d)) = O(\log n \cdot (1/\delta')^2)$ with
	$|R_{G''}(S'')| \ge \Omega(d) \ge \Omega(\delta^2 n)$. We update $S := S \cup S''$.
	%We apply Lemma~\ref{lemma:exp growth} on $G'$ to get that there exist a subset $S' \subseteq V'$ of size $O(\log n \cdot 1/\delta^2)$ such that $|R_{G'}(S')| \ge \Omega(\delta^2) \cdot |V'|$. Update $S := S \cup S'$.

	\end{enumerate}

	The number of times we apply case 1 or case 2 is at most $O(1/\delta^2)$, since each such step introduces $\Omega(\delta^2 n)$ new vertices to $R(S)$.
	%Every time we apply case 2, the set $R(S)$ grows by at least  $\Omega(\delta^2 n)$ vertices. Thus, the number of such iterations will be at most $O(1/\delta^2)$. 
	In each application of case 2, at most $O((1/\delta')^2 \cdot \log n) \le O((1/\delta^2) \cdot \log n)$ elements are added to $S$. Overall, the size of $S$ at the end of the process will be 
	\[
	O\(\tfrac{1}{\delta^2}\) + O\(\tfrac{1}{\delta^2} \cdot \tfrac{1}{\delta^2} \cdot \log n\)  = 
	O\( \tfrac{1}{\delta^4} \cdot \log n\)\;.\qedhere
	\]
	\end{proof}

\bibliographystyle{alpha}
\bibliography{references}

\appendix

\section{LDCs from LCCs}\label{app:LCC vs LDC}

In this section, we will show that $q$-query LCCs can be converted into $q$-query LDCs with only a constant loss in rate and preserving other parameters. Below we define LCCs and LDCs formally.
\begin{definition}[Locally Correctable Code]
Let $\Sigma$ be some finite alphabet and let $\cC\subseteq \Sigma^n$ be a set of codewords. $\cC$ is called a $(q,\delta,\epsilon)$-LCC if there exists a randomized algorithm $\cA$ such that following is true:
\begin{enumerate}
\item $\cA$ is given oracle access to some $z\in \Sigma^n$ and an input $i\in [n]$. It outputs a symbol in $\Sigma$ after making at most $q$ queries to $z$.
\item  If $z\in \Sigma^n$ is $\delta$-close to some codeword $c\in \cC$ in Hamming distance, then for every $i\in [n]$, $\Pr[\cA^z(i)=c_i]\ge \frac{1}{2}+\epsilon.$
\end{enumerate}
\end{definition}

It is easy to see that LCCs should have large minimum distance.
\begin{lemma}[Lemma 3.2 in~\cite{BG16}]\label{lem:lcc_distance}
If $\cC\subseteq \Sigma^n$ is a $(q,\delta,\epsilon)$-LCC, then $\cC$ has minimum distance $2\delta$ i.e. every two points in $\cC$ are $2\delta$-far in Hamming distance.
\end{lemma}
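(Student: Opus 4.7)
The plan is to prove Lemma A.2 by a short contradiction argument that uses the local correctability at a single coordinate. Suppose toward contradiction that there exist two distinct codewords $c, c' \in \cC$ with Hamming distance $d(c,c') < 2\delta n$. Let $D \subseteq [n]$ be the set of coordinates on which $c$ and $c'$ disagree, so $|D| < 2\delta n$, and pick any coordinate $i \in D$.

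The key step is to construct a single received word $z \in \Sigma^n$ that is simultaneously $\delta$-close to $c$ and to $c'$. First I would partition $D$ into two sets $D_1, D_2$ with $|D_1|, |D_2| \le \delta n$ (possible since $|D| < 2\delta n$). Then I would define $z$ to agree with $c$ on $D_1$ and on $[n]\setminus D$, and to agree with $c'$ on $D_2$. By construction, $z$ differs from $c$ only on $D_2$ and from $c'$ only on $D_1$, so $d(z,c), d(z,c') \le \delta n$.

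Now I apply the LCC guarantee to both codewords at the coordinate $i$. Since $z$ is $\delta$-close to $c$, the local corrector satisfies $\Pr[\cA^z(i) = c_i] \ge \tfrac12 + \epsilon$. Since $z$ is also $\delta$-close to $c'$, the same corrector satisfies $\Pr[\cA^z(i) = c'_i] \ge \tfrac12 + \epsilon$. But $c_i \neq c'_i$, so these two events are disjoint, and summing their probabilities gives $1 + 2\epsilon > 1$, which is impossible.

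There is essentially no obstacle here: the only subtle point is making sure the splitting of $D$ into $D_1, D_2$ really yields two halves each of size at most $\delta n$, which follows immediately from $|D| < 2\delta n$ by putting any $\lfloor |D|/2 \rfloor$ elements into $D_1$ and the rest into $D_2$. The argument uses only that $\cA$ is a valid local corrector at the single coordinate $i$; it makes no use of $q$ beyond the fact that $\cA$ is well defined, and no use of linearity or any structural property of $\cC$.
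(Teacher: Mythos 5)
Your argument is correct and is the canonical proof of this distance bound. The paper itself does not reproduce a proof here — it cites Lemma~3.2 of~\cite{BG16} — but the hybrid/midpoint construction you use (build a received word $z$ lying within relative distance $\delta$ of both $c$ and $c'$ by splitting the disagreement set $D$ roughly in half, then note that the corrector at a coordinate $i\in D$ would have to output two distinct symbols each with probability exceeding $1/2$) is exactly the standard argument and matches what~\cite{BG16} does. One small point worth being careful about when writing this up: the claim that $D$ can be split into $D_1,D_2$ with $|D_1|,|D_2|\le\delta n$ uses an implicit integrality assumption (e.g.\ that $\delta n$ is an integer, or that ``$\delta$-close'' and ``$2\delta$-far'' are read with compatible rounding); with $|D|\le 2\delta n-1$ the split $|D_1|=\lfloor |D|/2\rfloor$, $|D_2|=\lceil |D|/2\rceil$ gives both parts of size at most $\delta n$, which is the case one should state.
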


\begin{definition}[Locally Decodable Code]
Let $\Sigma$ be some finite alphabet and let $\cC:\BC^k\to \Sigma^n$. $\cC$ is called a $(q,\delta,\epsilon)$-LDC if there exists a randomized algorithm $\cA$ such that following is true:
\begin{enumerate}
\item $\cA$ is given oracle access to some $z\in \Sigma^n$ and an input $i\in [k]$. It outputs a bit after making at most $q$ queries to $z$.
\item  If $z\in \Sigma^n$ is $\delta$-close to a codeword $\cC(x)$ in Hamming distance for some $x\in \BC^k$, then for every $i\in [k]$, $\Pr[\cA^z(i)=x_i]\ge \frac{1}{2}+\epsilon.$
\end{enumerate}
\end{definition}

 We will need the notion of \textit{VC-dimension} for the reduction. 
\begin{definition}
Let $A\subseteq \BC^n$, then the VC-dimension of $A$, denoted by $\vc(A)$ is the cardinality of the largest set $I\subseteq [n]$ which is shattered by $A$ i.e. the restriction of $A$ to $I$, $A|_I=\BC^I$.
\end{definition}
The following lemma due to Dudley(\cite{Dud78}) says that if a set $A\subseteq \BC^n$ has points that are far apart from each other, then it has large VC-dimension.
\begin{lemma}[Theorem 14.12 in \cite{LT13}]\label{lem:vcdim}
Let $A\subseteq \BC^n$ such that for every distinct $x,y\in A$, $\norm{x-y}_{\ell_2}\ge \epsilon \sqrt{n}$. Then $$\vc(A)\ge \Omega\left(\frac{\log|A|}{\log(2/\epsilon)}\right).$$
\end{lemma}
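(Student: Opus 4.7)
The plan is to combine a random coordinate-subsampling argument with the Sauer--Shelah lemma. First I would translate the hypothesis into Hamming distance: since $A\subseteq\BC^n$, one has $\norm{x-y}_{\ell_2}^2 = d_H(x,y)$, so the $\ell_2$-separation $\norm{x-y}_{\ell_2}\ge \epsilon\sqrt{n}$ is equivalent to $d_H(x,y)\ge \epsilon^2 n$ for every distinct $x,y\in A$.

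Next I would sample a coordinate set $I\subseteq[n]$ of size $m:=\lceil (2/\epsilon^2)\log\card{A}\rceil$ uniformly at random. For any fixed distinct pair $x,y\in A$, the probability that $x|_I = y|_I$ equals $\binom{n-d_H(x,y)}{m}/\binom{n}{m}\le (1-\epsilon^2)^m\le e^{-\epsilon^2 m}$. A union bound over the $\binom{\card{A}}{2}$ pairs then forces the restriction map $x\mapsto x|_I$ to be injective on $A$ with positive probability, so I fix such an $I$ and set $A':=A|_I\subseteq\BC^m$. By construction $\card{A'}=\card{A}$, and $\vc(A')\le \vc(A)=:d$ because any subset of $I$ shattered by $A'$ is also shattered by $A$.

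The last step is to apply the Sauer--Shelah lemma on $A'\subseteq\BC^m$, which yields
\[
\card{A} \;=\; \card{A'} \;\le\; \binom{m}{\le d} \;\le\; \left(\tfrac{em}{d}\right)^{d},
\]
and therefore $\log\card{A}\le d\cdot\log(em/d)$. Plugging in $m=O(\log\card{A}/\epsilon^2)$ and bounding $\log(em/d)=O(\log(2/\epsilon)+\log\log\card{A})$ gives
\[
\log\card{A}\;\le\; O\!\left(d\cdot\bigl(\log(2/\epsilon)+\log\log\card{A}\bigr)\right),
\]
from which I would solve for $d$. The main technical nuisance will be disposing of the $\log\log\card{A}$ summand to recover the clean denominator $\log(2/\epsilon)$; this is handled by a short case split (if $1/\epsilon\ge \log\card{A}$ the target bound is essentially vacuous, and otherwise $\log\log\card{A}\le O(\log(1/\epsilon))$ is absorbed into the $\log(2/\epsilon)$ factor), yielding the advertised $\vc(A)=\Omega(\log\card{A}/\log(2/\epsilon))$.
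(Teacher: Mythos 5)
Your plan — random coordinate subsampling followed by Sauer--Shelah on the projected set — is sound, and is a standard route to Dudley's bound; the paper itself gives no proof of this lemma (it cites Theorem 14.12 of Ledoux--Talagrand). The Hamming translation, the sampling step, and the injectivity/VC-monotonicity observations are all correct. The problem is in the final calculation: bounding $\log(em/d)$ by $O(\log(2/\epsilon)+\log\log|A|)$ discards the $-\log d$ term in $\log(em/d)=\log(em)-\log d$, and it is precisely this term that cancels the extraneous $\log\log|A|$. If you keep it, the bound comes out clean with no patching needed: from $\log|A|\le d\log(em/d)$ with $m=O(\log|A|/\epsilon^2)$, set $u:=\log|A|/d$ and divide through by $d$ to obtain $u\le \log(O(1/\epsilon^2))+\log u$, i.e.\ $u-\log u\le O(\log(2/\epsilon))$. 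Since $u-\log u\ge u/2$ for $u\ge e^2$, this forces $u=O(\log(2/\epsilon))$, hence $\vc(A)=\log|A|/u\ge\Omega(\log|A|/\log(2/\epsilon))$ directly. (The remaining edge cases $\vc(A)\ge m$ and $\vc(A)\ge\log|A|$ give the bound trivially, using $\epsilon\le 1$.)

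The case split you propose to remove the $\log\log|A|$ term is logically reversed and does not work. In the regime $1/\epsilon<\log|A|$, taking logarithms gives $\log(1/\epsilon)<\log\log|A|$, which is the opposite of the inequality $\log\log|A|\le O(\log(1/\epsilon))$ that you assert there. This regime is genuinely realizable: take $\epsilon$ a fixed constant and $A$ a constant-rate Gilbert--Varshamov code with minimum Hamming distance at least $\epsilon^2 n$, so $\log|A|=\Theta(n)$ while $1/\epsilon=O(1)$; your lossy estimate would then only give $\vc(A)\ge\Omega(n/\log n)$ rather than the required $\Omega(n)$. In the complementary regime $1/\epsilon\ge\log|A|$, the target is not ``vacuous'' (it can demand $\vc(A)$ as large as $\Omega((1/\epsilon)/\log(2/\epsilon))$); it simply happens to follow from your estimate there, because in that regime $\log\log|A|\le\log(1/\epsilon)$ does hold and absorbs cleanly. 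So the roles of the two cases are swapped relative to what you wrote, and the problematic one is left unaddressed. Retaining the $\log(em/d)$ form, as above, avoids the issue entirely and requires no case split at all.
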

We are now ready to prove the reduction from LCCs to LDCs.

\begin{theorem}
Let $\cC\subseteq \Sigma^n$ be a $(q,\delta,\epsilon)$-LCC, then there exists a $(q,\delta,\epsilon)$-LDC $\cC':\BC^k \to \Sigma^n$ with $$k=\Omega\left(\frac{\log |\cC|}{\log(1/\delta)}\right).$$
\end{theorem}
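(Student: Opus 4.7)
The plan is to apply Dudley's lemma (Lemma~\ref{lem:vcdim}) to a binary ``hash'' of the codewords of $\cC$ and use the resulting shattered coordinate set as the message positions of the desired LDC. The first step uses Lemma~\ref{lem:lcc_distance} to guarantee that any two codewords of $\cC$ lie at Hamming distance at least $2\delta n$ in $\Sigma^n$, so the task reduces to passing from $\Sigma^n$ to a binary hypercube while preserving this distance up to a constant factor, and then reading off the bit-valued ``coordinate functions'' produced by the VC-dimension lemma.

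For the distance-preserving hashing, I would pick a uniformly random $f\colon\Sigma\to\BC^t$ for a parameter $t=\Theta(1+\log|\cC|/(\delta n))$ and apply it coordinatewise to obtain $\bar\cC\subseteq\BC^{nt}$. For any $c\neq c'\in\cC$ at Hamming distance $d\ge 2\delta n$ over $\Sigma$, the Hamming distance of $f(c)$ and $f(c')$ in $\BC^{nt}$ is a sum of $d$ independent $\mathrm{Bin}(t,1/2)$ variables with mean $\ge\delta nt$, so Chernoff plus a union bound over the $\binom{|\cC|}{2}$ pairs (calibrated by the choice of $t$) yields a fixed $f$ for which $\bar\cC$ has pairwise Hamming distance at least $\delta nt/2$ in $\BC^{nt}$, i.e.\ $\ell_2$-distance at least $\sqrt{\delta/2}\cdot\sqrt{nt}$. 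Invoking Lemma~\ref{lem:vcdim} with $N=nt$ and $\epsilon=\sqrt{\delta/2}$ then produces a shattered set $I\subseteq[nt]$ of size $k:=|I|=\Omega(\log|\cC|/\log(1/\delta))$.

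To finish, identify $[nt]$ with $[n]\times[t]$. By the shattering property, for every $x\in\BC^I$ there is a codeword $c_x\in\cC$ whose hash matches $x$ on $I$, i.e.\ $(f((c_x)_i))_j=x_{(i,j)}$ for all $(i,j)\in I$; define the LDC encoder by $\cC'(x):=c_x$. The decoder, on input $(i,j)\in I$ and oracle access to some $z\in\Sigma^n$ that is $\delta$-close to $\cC'(x)$, simply runs the LCC's local corrector on position $i$ (at most $q$ queries, success probability $\ge 1/2+\epsilon$), applies $f$ to the recovered symbol, and returns the $j$-th bit; this inherits the LDC parameters $(q,\delta,\epsilon)$ directly from $\cC$. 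The main technical point is choosing $t$ correctly: taking $t$ too small breaks the union bound when $|\cC|$ is large, while the bound from Dudley's lemma is insensitive to $t$ thanks to the $\sqrt{nt}$ normalization, so any $t$ making the union bound go through costs nothing in the final rate. The only remaining subtlety, that $f$ be injective on $\cC$, is automatic from the pairwise-distance guarantee.
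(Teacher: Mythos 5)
Your approach matches the paper's in its skeleton: reduce to the binary cube by encoding each $\Sigma$-symbol, apply the minimum-distance bound from Lemma~\ref{lem:lcc_distance} to get $\ell_2$-separation, invoke Dudley's lemma to extract a shattered set $I$, define the LDC encoder by picking any codeword consistent with the target bits on $I$, and decode a bit of $I$ by running the LCC corrector on the corresponding $[n]$-coordinate. The paper realizes the ``hash'' deterministically, as a fixed binary error-correcting code $\cC_0\colon\BC^s\to\BC^t$ with constant relative distance $\delta_0$ applied symbol-wise, so the distance preservation is immediate and no probabilistic argument is needed.

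Your probabilistic variant, however, has a genuine gap. When $f\colon\Sigma\to\BC^t$ is chosen uniformly at random, the per-coordinate contributions $|f(c_i)\oplus f(c'_i)|$ over the $d$ disagreement positions are \emph{not} independent: if the same symbol pair $(\sigma,\sigma')$ occurs at many positions where $c$ and $c'$ disagree, all those contributions are literally the same random variable. This is not a corner case -- for a Hadamard-type code over $\BC$ any two codewords disagree using only the pairs $(0,1)$ and $(1,0)$, so the hashed distance is essentially $d\cdot|f(0)\oplus f(1)|$. In that situation the failure probability for a single pair is $\exp(-\Theta(t))$, not $\exp(-\Theta(dt))$, and your choice $t=\Theta(1+\log|\cC|/(\delta n))$ is far too small for the union bound: when $\delta n\gg\log|\cC|$ you would take $t=O(1)$ and with constant probability $f(0)=f(1)$, collapsing the distance to $0$. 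The correct concentration argument decomposes $D=\sum_{j=1}^t D^{(j)}$ over the $t$ bit positions of $f$; the $D^{(j)}$ are i.i.d.\ with mean $d/2$ and range $[0,d]$, so Hoeffding gives $\Pr[D<dt/4]\le\exp(-\Omega(t))$ and you need $t=\Theta(\log|\cC|)$. With that corrected choice your conclusion is unaffected, since Dudley's bound depends only on $\log|\bar\cC|$ and the relative $\ell_2$-separation, not on the ambient dimension $nt$. Alternatively, simply replace the random $f$ by a fixed good binary code as the paper does, which avoids the probabilistic analysis entirely.
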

\begin{proof}
Wlog let us assume $\Sigma=\BC^s$. Let $\cC_0:\BC^s\to \BC^t$ be an error correcting code with distance $\delta_0$ which is some fixed constant. We can extend $\cC_0:\Sigma^n\to \BC^{nt}$ as $$\cC_0(z_1,\cdots,z_n)=(\cC_0(z_1),\cdots,\cC_0(z_n)).$$ By Lemma~\ref{lem:lcc_distance}, every two points in $\cC$ are $2\delta$-far in Hamming distance, it is easy to see that in the concatenated code $\cC_1=\cC_0\circ\cC\subseteq \BC^{tn}$ every two points are $2\delta\cdot \delta_0$ far apart in Hamming distance. So every two points in $\cC_1$ are separated by $\epsilon\sqrt{nt}$ distance in $\ell_2$ norm where $\epsilon=\sqrt{2\delta\delta_0}$. So by Lemma~\ref{lem:vcdim}, $$\vc(\cC_1)\ge \Omega\left(\frac{\log|\cC_1|}{\log(2/\epsilon)}\right)=\Omega\left(\frac{\log |\cC|}{\log(1/\delta)}\right).$$ Therefore there exists a set $I\subseteq [nt]$ of size $k=\vc(\cC_1)$ such that $\cC_1|_I=\BC^I$.

Now define $\cC':\BC^I\to \Sigma^n$ as follows: $\cC'(x)=z$ where $z\in \cC$ is chosen such that $\cC_0(z)|_I=x$ (if there are many such $z$, you can choose one arbitrarily). So the image $\cC'(\BC^I)\subseteq \cC$. Now we claim that $\cC'$ is an $q$-query LDC. Given a word $r\in \Sigma^n$ which is $\delta$-close to $\cC'(x)$, say we want to decode the $i^{th}$ message coordinate $x_i$. Suppose $i$ belongs to the $j^{th}$ block of $(\BC^t)^n$ for some $j\in [n]$. The local decoder of $\cC'$ will run the local corrector of $\cC$ to correct the $j^{th}$ coordinate of $r$ and apply $\cC_0$ to find the required bit $x_i$. So the local decoder for $\cC'$ makes at most $q$ queries and the probability that it outputs $x_i$ correctly is at least $1/2+\epsilon.$
\end{proof}
\section{Decomposition into expanding subgraphs}\label{sec:decomposition}
The goal of this section is to develop a decomposition lemma that approximately partitions any directed graph into a collection of disjoint expanding subgraphs. We use the following notion of edge expansion:

\begin{definition}
A directed graph $G=(V,E)$ is an {\em $\alpha$-edge expander} if for every nonempty $S\subset V$, $$|E(S,V\setminus S)|\ge \alpha |S||V\setminus S|.$$ Here, $E(A,B)$ is the set of edges going from $A$ to $B$.
\end{definition}

We will need the following degree form of Szemer\'edi regularity lemma which can be derived from the usual form of Szemer\'edi regularity lemma for directed graphs proved in~\cite{AS04}.
\begin{definition}
Let $G=(V,E)$ be a directed graph. We denote the indegree of a vertex $v\in V$ by $\indeg_G(v)$ and the outdegree by $\outdeg_G(v)$. Given disjoint subsets $A,B\subset V$, the density $d(A,B)$ between $A,B$ is defined as $$d(A,B)=\frac{E(A,B)}{|A||B|}$$ where $E(A,B)$ is the set of edges going from $A$ to $B$. We say that $(A,B)$ is $\eps$-regular if for every subsets $A'\subset A$ and $B'\subset B$ such that $|A'|\ge \eps |A|$ and $|B'|\ge \eps |B|$, $|d(A',B')-d(A,B)|\le \eps.$
\end{definition}
Note that the order of $A,B$ is important in the definition of an $\eps$-regular pair.

\begin{lemma}[Szemer\'edi regularity lemma for directed graphs (see Lemma 39 in~\cite{Tay14})]\label{lem:szemeredi_regularity_directed}
For every $\epsilon>0$, there exists an $M(\eps)>0$ such that the following is true.
Let $G=(V,E)$ be any directed graph on $|V|=n$ vertices and let $0<d<1$ be any constant. Then there exists a directed subgraph $G'=(V',E')$ of $G$ and an equipartition of $V'$ into $k$ disjoint parts $V_1,\cdots,V_k$ such that
\begin{enumerate}
\item $k\le M(\eps)$.
\item $|V\setminus V'|\le \eps n$.
\item All parts $V_1,\cdots,V_k$ have the same size $m\le \eps n$. 
\item $\outdeg_{G'}(v)\ge \outdeg_G(v)-(d+\eps)n$ for every $v\in V'$.
\item $\indeg_{G'}(v)\ge \indeg_G(v)-(d+\eps)n$ for every $v\in V'$.
\item $G'$ doesn't contain edges inside the parts $V_i$ i.e. $E'(V_i,V_i)=\emptyset$ for every $i$.
\item All pairs $G'(V_i,V_j)$ with $i\ne j$ are $\eps$-regular, each with density $0$ or at least $d$.
\end{enumerate}
\end{lemma}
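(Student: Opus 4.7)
The plan is to derive this degree form from the basic ``partition form'' of the directed Szemerédi regularity lemma of Alon and Shapira~\cite{AS04}, via a standard cleanup procedure. I would first apply the basic directed lemma to $G$ with a small internal parameter $\eps'$ and a lower bound $K_0 \ge 1/\eps$ on the number of parts, obtaining an equipartition $\{W_0, V_1, \ldots, V_K\}$ of $V$ with $|W_0| \le \eps' n$, $K_0 \le K \le M'(\eps')$, equal-sized parts $V_1, \ldots, V_K$ of size $\le \eps n$, and at most $\eps' K^2$ of the ordered pairs $(V_i, V_j)$ failing to be $\eps'$-regular.

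Next, I would form an auxiliary graph $\tilde{G}$ by removing from $G$: (a) all edges incident to $W_0$; (b) all edges lying inside a single part $V_i$; (c) all edges in non-$\eps'$-regular ordered pairs; and (d) all edges in regular ordered pairs $(V_i, V_j)$ with density less than $d$. To guarantee the per-vertex degree condition, I would mark as ``bad'' any vertex $v \in V_i$ for which, for some $V_j$, the number of edges from $v$ to $V_j$ or from $V_j$ to $v$ exceeds $(d(V_i, V_j) + \eps') |V_j|$. By the definition of $\eps'$-regularity, each ordered pair contributes at most $\eps' |V_i|$ such bad vertices in each direction, so the total bad set $B$ has size $\le 2 \eps' K \cdot n$. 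Choosing $\eps'$ small enough that $\eps' \cdot M'(\eps') \le \eps / 8$ (always possible since $M'$ is a fixed function of $\eps'$), we get $|B| \le \eps n / 4$. I would then set $V'' := V \setminus (W_0 \cup B)$ and trim each $V_i \cap V''$ down to the common size $m := \min_i |V_i \cap V''|$ to restore an equipartition; let $V' \subseteq V''$ be the trimmed survivors and $G' := \tilde{G}[V']$.

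Verification of the seven properties is then routine. Condition (1) holds because $K \le M'(\eps')$ is a function of $\eps$ only; (2) holds because $|V \setminus V'|$ accumulates $|W_0| + |B| + (\text{trimming loss}) \le \eps n$; (3) holds since $m \le n/K \le \eps n$ by our choice of $K_0$; (4)--(5) hold because a surviving vertex $v$'s degree drop from $G$ to $G'$ is bounded by $|V \setminus V'| \le \eps n$ (edges to discarded vertices), plus $|V_i| \le \eps n$ (intra-part edges), plus $O(\eps') n$ (edges to irregular pairs, since at most $O(\eps' K)$ irregular pairs involve $V_i$, each of size $n/K$), plus $\sum_{j: d(V_i, V_j) < d} (d + \eps') |V_j| \le (d + \eps') n$ (edges to regular low-density parts, using $v \notin B$), together summing to at most $(d + \eps) n$; (6) is by construction of $\tilde{G}$; and (7) follows because trimming each part by an $o(1)$ fraction preserves $\eps'$-regularity at a slightly worse $\eps$-regularity parameter, with pair densities shifting by at most $O(\eps')$.

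The main obstacle is the circular parameter choice: one needs $\eps' \ll \eps / K$ to ensure the bad set $B$ is small, while $K$ itself is bounded only by $M'(\eps')$. This is resolved by choosing $\eps'$ small enough as a function of $\eps$ to satisfy $\eps' \cdot M'(\eps') \le \eps / 8$, which is always possible since $M'$ is a fixed function; the cost is that the resulting bound $M(\eps)$ in the conclusion is typically one tower level larger than $M'(\eps)$.
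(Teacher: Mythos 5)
The paper does not actually prove this lemma: it is quoted verbatim from Lemma~39 of~\cite{Tay14}, with the remark that it ``can be derived from the usual form'' of the directed regularity lemma of~\cite{AS04}. So there is no in-paper argument to compare against, and your proposal has to be judged on its own.

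Your plan---apply the plain directed regularity lemma with a much finer internal parameter $\eps'$ and a lower bound on the number of parts, delete edges incident to the exceptional class, edges inside a part, edges in irregular pairs and edges in low-density regular pairs, then prune a small set of ``bad'' vertices and trim to restore an equipartition---is indeed the standard route from the partition form to the degree form, and the key observation about the circular parameter dependence (choose $\eps'$ small enough that $\eps'\cdot M'(\eps')\le \eps/8$) is exactly what makes the argument go through. That part is correct in spirit.

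Two of the intermediate claims are not justified as stated, although they can be repaired without changing the plan. First, ``at most $O(\eps' K)$ irregular pairs involve $V_i$'' is not a consequence of the regularity lemma: the lemma only bounds the \emph{total} number of irregular ordered pairs by $\eps' K^2$, and a single class can in principle participate in all of them. What actually saves you is that with $\eps' K\le \eps/8$, the worst-case per-vertex degree loss to irregular pairs is at most $(\eps' K^2)\cdot m\le (\eps/8)\,n$; this is an $O(\eps)n$ bound, not the $O(\eps')n$ you wrote, so it eats into your $\eps n$ budget more than you account for. Second, the bound ``each ordered pair contributes at most $\eps'|V_i|$ such bad vertices'' uses $\eps'$-regularity and is therefore valid only for \emph{regular} ordered pairs; you should explicitly restrict the bad-vertex marking to regular pairs (which is fine, since irregular-pair edges are removed wholesale and do not need the marking). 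The same caveat applies to your claim that trimming shifts pair densities by only $O(\eps')$: the trim fraction per class is up to $O(\eps' K)=O(\eps)$, so densities can shift by $O(\eps)$, and the ``density $\ge d$'' conclusion needs a small slack (e.g.\ delete edges only in pairs of density $<d'$ for some $d'=d+O(\eps)$, and quote the final lemma with that $d'$, or tighten parameters). Finally, your accounting for conditions~(4)--(5) does not close as written: $|V\setminus V'|\le \eps n$ plus $|V_i|\le \eps n$ plus the irregular-pair term plus $(d+\eps')n$ already sums to roughly $(d+2\eps)n$, not $(d+\eps)n$. All of this is a matter of tightening constants (choose $K_0\gg 1/\eps$ so intra-part losses are $\ll \eps n$, and choose $\eps'\cdot M'(\eps')\ll \eps$ so exceptional/bad/trimming losses are $\ll \eps n$), so the proof idea survives, but as stated the arithmetic overshoots the budget and the irregular-pair claim needs to be corrected.
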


The regularity lemma above asserts pseudorandomness in the edges going between parts of the partition. For our application and others, it is more natural to require the edges inside each subgraph to display pseudorandomness. As the proof of our Decomposition Lemma shows, we can obtain this from \cref{lem:szemeredi_regularity_directed} with some work.
\begin{lemma}[Decomposition Lemma]\label{lem:expander_decomposition_directed}
Let $G=(V,E)$ be any directed graph on $|V|=n$ vertices. For $0< d< 1$ and $0<\eps<d/6$, there exists a directed subgraph $G'=(V',E')$ and a partition of $V'$ into $U_1,U_2,\cdots,U_K$  where $K\le M(\eps)$ depends only on $\eps$ such that:
\begin{enumerate}
\item $|V\setminus V'|\le 3\eps n$.
\item $\outdeg_{G'}(v)\ge \outdeg_G(v)-(d+3\eps)n$ for every $v\in V'$.
\item $\indeg_{G'}(v)\ge \indeg_G(v)-(d+3\eps)n$ for every $v\in V'$.
\item There are no edges from $U_i$ to $U_j$ where $i>j$.
\item For $1\le i\le K$, the induced subgraph $G'(U_i)$ is either empty or is a $\alpha$-edge expander where $\alpha=\alpha(\eps)>0$.
\end{enumerate}
\end{lemma}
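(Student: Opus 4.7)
The plan is to combine the directed Szemer\'edi regularity lemma (\cref{lem:szemeredi_regularity_directed}) with a strongly connected component clustering of the resulting reduced digraph. I would first invoke \cref{lem:szemeredi_regularity_directed} with density threshold $d$ and a regularity parameter $\eps'$ chosen as a function of $\eps$ and $M(\eps)$, obtaining a subgraph $G'_0\subseteq G$ and an equipartition of all but $\eps'n$ vertices into parts $V_1,\dots,V_k$ of common size $m_0=n/k$, with $k\le M(\eps')$. I would then form the reduced digraph $R$ on $[k]$ placing an edge $a\to b$ precisely when $(V_a,V_b)$ is an $\eps'$-regular pair of density at least $d$; by property~(7) of \cref{lem:szemeredi_regularity_directed}, every other ordered cross-part pair contributes no edges to $G'_0$ at all. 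Computing the strongly connected components of $R$ and listing them in a topological order $C_1,\dots,C_K$ (so every $R$-edge runs from some $C_i$ to some $C_j$ with $i\le j$), I set $U_i:=\bigcup_{a\in C_i}V_a$. Property~(4) is then immediate: any edge of $G'_0$ between $U_i$ and $U_j$ with $i\ne j$ sits between distinct parts in different SCCs, which forces an $R$-edge in the same direction and hence $i<j$.

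For property~(5), I need a uniform lower bound on in- and out-degrees inside each non-trivial $U_i$, because the edge-expander definition must apply even to singleton cuts. I would therefore perform a brief clean-up: for each part $V_a$ in a non-trivial SCC $C_i$ and each in-SCC $R$-neighbor $V_b$ of $V_a$, $\eps'$-regularity says that at most $\eps'|V_a|$ vertices of $V_a$ have out-degree to $V_b$ smaller than $(d-\eps')m_0$, and similarly for in-degrees. Discarding all such vertices removes at most $2\eps'M(\eps')m_0$ per part and at most $2\eps'M(\eps')n$ overall, so choosing $\eps'$ with $\eps'(1+2M(\eps'))\le 3\eps$ preserves properties~(1)--(3) within the $3\eps n$ slack. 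After clean-up, every surviving vertex of a non-trivial $C_i$ has out- and in-degree at least $(d-\eps')m_0$ to each of its in-SCC $R$-neighbors. If $|C_i|=1$ then property~(6) of \cref{lem:szemeredi_regularity_directed} makes $G'(U_i)$ empty, handling one alternative of property~(5); it remains to prove edge expansion when $|C_i|\ge 2$.

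Fix $|C_i|\ge 2$ and a non-trivial cut $S\subset U_i$, and define the density profile $p_a:=|S\cap V_a|/m_0$ for $a\in C_i$. In the \emph{unbalanced} regime $\max_a p_a-\min_a p_a\ge 2|C_i|\eps'$, strong connectivity of $R$ on $C_i$ yields a directed $R$-path of length at most $|C_i|$ from the argmax part to the argmin part, and pigeonhole along this path produces a consecutive edge $V_a\to V_b$ with $p_a\ge \eps'$ and $1-p_b\ge \eps'$; $\eps'$-regularity then supplies at least $(d-\eps')(\eps')^2 m_0^2$ edges from $S\cap V_a$ to $V_b\setminus S$. In the complementary \emph{flat} regime all $p_a$ lie within $2|C_i|\eps'$ of a common value, so up to swapping $S$ with $U_i\setminus S$ I may assume every $p_a\le 2|C_i|\eps'$; then each surviving $v\in S\cap V_a$ uses a single in-SCC $R$-out-neighbor $V_b$ of $V_a$ to contribute at least $(d-\eps')m_0 - 2|C_i|\eps'm_0\ge (d/2)m_0$ crossing edges, summing to $\ge (d/2)|S|m_0$ in total. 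Since $|S||U_i\setminus S|\le (|C_i|m_0)^2/4$, both regimes yield the expander bound with $\alpha=\Theta(d(\eps')^2/M(\eps')^2)$.

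The main obstacle I anticipate is the flat regime: $\eps'$-regularity between parts alone does not suffice to guarantee crossing edges when $S$ is narrowly concentrated inside every part, which is precisely what necessitates the per-vertex clean-up step. Matching the resulting vertex loss against the $3\eps n$ budget of properties~(1)--(3) is what pins down the initial choice of $\eps'$ as a function of $\eps$ and $M(\eps)$, with the final expansion constant $\alpha$ depending on $\eps$ and $d$ through these parameters. A minor secondary check is that $K\le M(\eps')\le M(\eps)$, which follows because the number of SCCs is at most the number of parts in the regularity partition.
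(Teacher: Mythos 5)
Your overall plan --- regularity lemma, reduced digraph of dense pairs, SCC decomposition in topological order, clean-up, and a case analysis on the density profile $p_a$ to certify edge expansion --- is the same as the paper's, but two steps as you have written them do not close. The clean-up is too aggressive: you discard from each part $V_a$ in a non-trivial SCC $C_i$ the vertices with low degree to \emph{every} in-SCC $R$-neighbor of $V_a$, removing up to $2\eps'|C_i|m_0\le 2\eps'M(\eps')m_0$ per part and $2\eps'M(\eps')n$ overall, and then require $\eps'(1+2M(\eps'))\le 3\eps$. No such $\eps'$ exists once $\eps$ is small: since the parts have common size $m_0\le\eps'n$ and cover at least $(1-\eps')n$ vertices, the number of parts satisfies $k\ge(1-\eps')/\eps'$, hence $M(\eps')\ge k\ge(1-\eps')/\eps'$, and $\eps'(1+2M(\eps'))\ge 2(1-\eps')>3\eps$ whenever $\eps<1/2$. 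The same over-cleaning also corrupts the expansion estimates: the degree count in your flat regime, $(d-\eps')m_0-2|C_i|\eps'm_0$, is negative when $|C_i|$ is of order $1/\eps'$, and the residual set $V_b\setminus S$ that your regularity estimate needs to have size $\ge\eps'm_0$ may be empty. The paper avoids all of this by designating a \emph{single} out-neighbor $N^+(j)$ and a single in-neighbor $N^-(j)$ in $R(C_i)$ per part and discarding only the $\le 2\eps m_0$ vertices of $V_j$ with low degree to those two specific parts; the case analysis genuinely needs no more than that.

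The unbalanced/flat split is also not a working dichotomy. In the flat regime you claim that, after swapping $S$ with $U_i\setminus S$ if necessary, every $p_a\le 2|C_i|\eps'$; this fails when the common value of the $p_a$'s is bounded away from both $0$ and $1$ (all $p_a\approx 1/2$, say), where swapping changes nothing. The correct split is the paper's three cases: (i) some $p_{a^*}\ge\eps'$ \emph{and} some $p_{b^*}\le 1-\eps'$; (ii) all $p_a<\eps'$; (iii) all $p_a>1-\eps'$. Your directed-path/pigeonhole observation already handles case (i) in full generality --- it only uses $p_{a^*}\ge\eps'$ at the source and $1-p_{b^*}\ge\eps'$ at the sink of a directed $R$-path, not any lower bound on the spread --- so the flat-middle-$\mu$ situation belongs with the path argument, while cases (ii) and (iii) are handled by counting out- (resp.\ in-) degrees to the one designated neighbor $N^+(j)$ (resp.\ $N^-(j)$) after the milder clean-up described above.
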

%\gnote{I think we can actually get an $(\alpha,1)$-edge expander, though we don't really need it in our application.}

\begin{proof}
We will first apply Lemma~\ref{lem:szemeredi_regularity_directed} to $G$ to get a directed subgraph $G''(V'',E'')$ along with a partition of $V''=V_1\cup \cdots \cup V_k$ as in the lemma where $k\le M(\eps)$. We know that every pair $G''(V_i,V_j)$ is $\eps$-regular with density $0$ or at least $d$. Let us construct a reduced directed graph $R([k],E_R)$ where $(i,j)\in E_R$ iff $G''(V_i,V_j)$ has density at least $d$. Now $R$ has a partition into strongly connected components say given by $[k]=S_1\cup \cdots \cup S_K$ where $K\le M(\eps)$ and $S_1,S_2,\cdots,S_K$ are in topological ordering i.e. there are no edges from $S_i$ to $S_j$ when $i>j$. We will find a large subset $V_j'\subset V_j$ for each of the parts such that $|V_j\setminus V_j'|\le 2\eps |V_j|$ and define $U_i=\cup_{j\in S_i}V_j'$. Our final vertex set will be $V'=\cup_{i=1}^K U_i$ and the graph $G'$ will be the subgraph $G''(V')$. We have $$|V\setminus V'|\le |V\setminus V''| + \sum_{i=1}^k |V_i \setminus V_i'|\le 3\eps n.$$  For every $v\in V'$, $$\indeg_{G'}(v)\ge \indeg_{G''}(v)-\sum_{i=1}^k |V_i \setminus V_i'| \ge \indeg_G(v)- (d +\eps)n-2\eps n=\indeg_G(v)-(d+3\eps)n.$$ Similarly $\outdeg_{G'}(v)\ge \outdeg_G(v)-(d+3\eps)n$. Because the components $S_1,\cdots,S_k$ are in topological ordering with respect to the reduced graph $R$, we cannot have any edges between $U_i$ and $U_j$ where $i>j$.

Now we describe how to find these subsets $V_j'$ where $j\in S_i$ for each of the $S_i$'s and also show the required expansion property.
If $S_i$ is a singleton set i.e. $S_i=\{j\}$ for some $j$, then we just define $V_j'=V_j$. In this case, we will have $U_i=V_j$ and the subgraph $G'(U_i)$ will be empty. If $|S_i|>1$, the subgraph $R(S_i)$ is strongly connected with at least two vertices. So every vertex $j\in S_i$ has at least one outgoing neighbor and one incoming neighbor in $R(S_i)$; choose one outgoing neighbor and call it $N^+(j)$ and choose one incoming neighbor and call it $N^-(j)$. 
Let $V_j'\subset V_j$ be the subset of vertices with at least $(d-\eps)|V_{N^+(j)}|$ outgoing neighbors in $V_{N^+(j)}$ and at least $(d-\eps)|V_{N^-(j)}|$ incoming neighbors in $V_{N^-(j)}$. We will now show that $|V_j\setminus V_j'|\le 2\eps |V_j|$. Let $B_j^+\subset V_j$ be the set of vertices with less than $(d-\eps)|V_{N^+(j)}|$ neighbors in $V_{N^+(j)}$. Define $B_j^-\subset V_j$ similarly. We have $V_j'=V_j\setminus (B_j^+\cup B_j^-)$. So it is enough to show $|B_j^+|\le \eps |V_j|$ and $|B_j^-|\le \eps |V_j|$. 

Consider the $\eps$-regular pair $(V_j,V_{N^+(j)})$ which has density at least $d$. The density between $B_j^+$ and $V_{N(j)}$ can be bounded as $$\frac{|E''(B_j^+,V_{N^+(j)})|}{|B_j^+||V_{N^+(j)}|} < d-\eps\le d(V_j,V_{N^+(j)})-\eps.$$
By $\eps$-regularity of $G''(V_j,V_{N^+(j)})$, we must have $|B_j^+|\le \eps |V_j|$ as required. Similarly we have $|B_j^-|\le \eps |V_j|$.

Now we need to show that $G'(U_i)$ is an $\alpha$-edge expander. Let $A\subset U_i$. For $j\in S_i$, define $A_j=A\cap V_j'$ and $\barA_j=V_j'\setminus A$ and let $\barA=U_i\setminus A$. We want to show that $E'(A,\barA)\ge \alpha|A||\barA|$ for some constant $\alpha(\eps)>0$. We have three cases:\\
\textbf{Case 1:} $\exists j,\ell\in S_i$ such that $|A_j|\ge 2\eps |V_j'|$ and $|\barA_\ell|\ge 2\eps |V_\ell'|$.\\
Label vertices of $R(S_i)$ with $\mathcal{A}$ if $|A_j|\ge 2\eps |V_j'|$ and also with a label $\mathcal{\barA}$ if $|\barA_j|\ge 2\eps |V_j'|$.\footnote{Some vertices can get both labels, but every vertex will get at least one label.} Every vertex should get at least one of the labels and $j$ has label $\mathcal{A}$ and $\ell$ has label $\mathcal{\barA}$. Since $|S_i|>1$, we can assume with out loss of generality that $j\ne \ell$. Since the graph $R(S_i)$ is strongly connected, there is a directed path from $j$ to $\ell$. On this path, there must exist two adjacent vertices $p,q\in S_i$ such that $p$ has label $\mathcal{A}$, $q$ has label $\mathcal{\barA}$ and there is an edge from $p$ to $q$ in $R(S_i)$. We have $$|A_p|\ge 2\eps |V_p'|\ge 2\eps (1-2\eps) |V_p|\ge \eps |V_p|$$ and similarly $|\barA_q|\ge  \eps |V_q|$. By $\eps$-regularity of $G''(V_p,V_q)$, we can lower the bound the number of edges between $A$ and $\bar{A}$ as follows:
$$|E'(A,\barA)|\ge |E''(A_p,\barA_q)| \ge (d-\eps)|A_p||\barA_q| \ge \eps^2 (d-\eps) n^2/k^2\ge \alpha_0 |A||\barA|$$
where $\alpha_0(\eps)=5\eps^3/M(\eps)^2$ is some constant depending on $\eps$.\\
\textbf{Case 2:} For every $j\in S_i$, $|A_j|<2\eps |V_j'|$. \\
By averaging there exists some $j\in S_i$ such that $|A_j|\ge |A|/|S_i|\ge |A|/k$. We know that every vertex in $V_j'$ has at least $(d-\eps)|V_{N^+(j)}|$ out neighbors in $V_{N^+(j)}$, out of these at least $$(d-\eps)|V_{N^+(j)}|-|V_{N^+(j)}\setminus V_{N^+(j)}'|-|A_{N^+(j)}|\ge (d-5\eps)|V_{N^+(j)}|$$ should lie in $\barA_{N^+(j)}$. So we can bound the expansion as follows:
$$|E'(A,\barA)|\ge |E''(A_j,\barA_{N^+(j)})| \ge (d-5\eps)|V_{N^+(j)}||A_j| \ge (d-5\eps)\frac{n}{k}\frac{|A|}{k} \ge\alpha_1 |A||\barA|$$
where $\alpha_1=\eps/M(\eps)^2$ is some constant depending only on $\eps$.\\
\textbf{Case 3:} For every $j\in S_i$, $|\barA_j|<2\eps |V_j'|$. \\
This is very similar to Case 2. By averaging there exists some $j\in S_i$ such that $|\barA_j|\ge |\barA|/|S_i|\ge |\barA|/k$. Every vertex in $V_j'$ has at least $(d-\eps)|V_{N^-(j)}|$ incoming neighbors in $V_{N^-(j)}$, out of these at least $$(d-\eps)|V_{N^-(j)}|-|V_{N^-(j)}\setminus V_{N^-(j)}'|-|\barA_{N^-(j)}|\ge (d-5\eps)|V_{N-(j)}|$$ should lie in $A_{N^-(j)}$. So,
$$|E'(A,\barA)|\ge |E''(A_{N^-(j)},\barA_j)| \ge (d-5\eps)|V_{N^-(j)}||\barA_j| \ge (d-5\eps)\frac{n}{k}\frac{|\barA|}{k} \ge\alpha_1 |A||\barA|$$
where $\alpha_1=\eps/M(\eps)^2$.

Finally we can take $\alpha=\min(\alpha_0,\alpha_1)$, to get the required expansion property.
\end{proof}

%\begin{claim}
%$|T_2\setminus R(S)|\ge \frac{\tau}{3}n$
%\end{claim}
%\begin{proof}
%Let $v\in T_2\setminus R(S)$, then for every edge $(i,j)\in \cM_v$, one of the vertices must lie in $T_2\setminus R(S)$ by Claim~\ref{claim:R(S)_closure}. This implies that $|T_2\setminus R(S)|\ge \frac{\tau}{3n}$.
%\end{proof}
%\gnote{We don't really need this claim I think, because we don't need to know explicitly that the graph is dense to do the decomposition.}
The decomposition lemma allows to give an alternative proof for Claim~\ref{claim:graph theoretic}, with worse dependency on $\tau$.
To account for that, we restate Claim~\ref{claim:graph theoretic} and replace $O((1/\tau^4) \cdot \log n)$ with $O_{\tau}(\log n)$.
\begin{claim}
Let $S$ be a set of size  $O_{\tau}(\log n)$ such that $R(S) = T_1$.	
Then, $S$ can be extended by at most  $O_{\tau}(\log n)$ elements, such that $R(S) = V$.	
\end{claim}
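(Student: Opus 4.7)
The plan is to construct the auxiliary directed graph $\vecarrow{G}$ associated with the LCC, apply the Decomposition Lemma (Lemma~\ref{lem:expander_decomposition_directed}) to it, and then process the resulting topologically-ordered edge-expanding components one at a time, using edge expansion to run a growth argument analogous to Lemma~\ref{lemma:exp growth} inside each component.

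First, I define the directed graph $\vecarrow{G}=(V,\vec E)$ by putting $(j,i),(k,i)\in \vec E$ for each matching edge $(j,k)\in\cM_i$. By Lemma~\ref{lem:lcc_matching} (and by introducing dummy matchings of size $\delta n$ for vertices in $T_1$, as in the reduction preceding Claim~\ref{claim:T1 empty}), every vertex has in-degree at least $2\delta n$ in $\vecarrow G$, where $\delta=\tau/4$. I apply Lemma~\ref{lem:expander_decomposition_directed} with parameters $d,\epsilon$ chosen small enough that $d+3\epsilon<\delta$ and $(K+3)\epsilon\le \delta/10$, where $K=M(\epsilon)$. This produces a subgraph $G'$, an exceptional set $V\setminus V'$ of size at most $3\epsilon n$, and a topologically-ordered partition $V'=U_1\sqcup\cdots\sqcup U_K$ such that each $G'(U_\ell)$ is either empty (in which case $U_\ell$ comes from a singleton strongly-connected component in the reduced graph, so $|U_\ell|\le \epsilon n$) or an $\alpha$-edge expander with $\alpha=\alpha(\epsilon)>0$, and every $v\in V'$ still satisfies $\indeg_{G'}(v)\ge 2\delta n - (d+3\epsilon)n = \Omega(\delta n)$.

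The main loop walks through the components $U_1,\ldots,U_K$ in topological order, maintaining the invariant $R(S)\supseteq T_1\cup\bigcup_{\ell'<\ell}U_{\ell'}$ before processing $U_\ell$. For a non-empty $U_\ell$, I first add one seed vertex of $U_\ell$ to $S$, setting $B:=R(S)\cap U_\ell\ne\emptyset$, and then repeatedly grow $B$ as follows. While $|B|\le |U_\ell|/2$, the $\alpha$-edge expansion of $G'(U_\ell)$ gives $|E'(B,U_\ell\setminus B)|\ge \alpha|B|\cdot|U_\ell\setminus B|$; the topological ordering together with the fact $v\notin R(S)$ for $v\in U_\ell\setminus B$ forces each such directed cut edge $(j,v)$ to arise from a matching edge $(j,k)\in\cM_v$ whose other endpoint $k$ lies in $(U_\ell\setminus B)\cup(V\setminus V')$. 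The contribution from matching edges with $k\in V\setminus V'$ is at most $3\epsilon n^2$, which is dominated by $\alpha|B|\cdot|U_\ell\setminus B|$ as long as we are bounded away from the degenerate endpoints of the iteration. A short averaging argument then produces a vertex $b\in U_\ell\setminus B$ whose addition to $S$ triggers the propagation of $\Omega(\alpha|B|)$ new labels within $U_\ell$ into $R(S)$, multiplying $|B|$ by $(1+\Omega(\alpha))$. After $O((1/\alpha)\log n)=O_\tau(\log n)$ such additions, either $B=U_\ell$ or the uncovered remainder is small enough to be absorbed into the final step. Summing over the $K=O_\tau(1)$ components, $S$ grows by $O_\tau(\log n)$ vertices in total.

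Finally, by the parameter choice, the sum of sizes of $V\setminus V'$ and of the empty components is at most $(K+3)\epsilon n \le \delta n$, so once every non-empty $U_\ell$ has been covered we have $|R(S)|\ge (1-\delta)n$, and Lemma~\ref{lemma:too big is everything} concludes $R(S)=V$. The main difficulty is the joint parameter selection: we need the in-degree loss from passing to $G'$ to be much smaller than $\delta$, the exceptional set to be $o(\delta n)$, and the total mass of singleton parts to be at most $\delta n$. Because $K=M(\epsilon)$ grows at Szemer\'edi-regularity-lemma rate, forcing $K\epsilon\le\delta/10$ compels $\epsilon$ to be tower-type small in $1/\delta$; consequently $\alpha$ is tower-type small as well, which gives the weaker $O_\tau(\log n)$ dependence compared to the $\mathrm{poly}(1/\tau)\cdot\log n$ bound obtained in the main body.
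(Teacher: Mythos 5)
Your overall strategy --- constructing the auxiliary directed graph, invoking the Decomposition Lemma~\ref{lem:expander_decomposition_directed}, and growing $R(S)$ component-by-component using edge expansion --- matches the alternative proof given in Appendix~\ref{sec:decomposition}. However, there is a genuine gap in how you dispose of the \emph{empty} components $U_\ell$.

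You propose choosing $\epsilon$ so that $(K+3)\epsilon\le\delta/10$ with $K=M(\epsilon)$, so that the mass of $V\setminus V'$ together with all empty components (each of size at most $\epsilon n$) is below $\delta n$, and then finishing via Lemma~\ref{lemma:too big is everything}. This parameter choice is not merely tower-type expensive --- it is unsatisfiable. In the regularity partition of Lemma~\ref{lem:szemeredi_regularity_directed} each part has size $m\le\epsilon n$, so the number of parts is $k\ge(1-\epsilon)/\epsilon$. In the worst case (for instance when the reduced graph is a DAG) every strongly connected component is a singleton, so $K=k$ and every $U_\ell$ is an empty component; the total mass of the empty components is then $K\cdot m$, which can be as large as $(1-\epsilon)n$, i.e.\ essentially all of $V'$. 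Hence $(K+3)\epsilon\ge K\epsilon\ge 1-\epsilon>\delta$ for any admissible $\epsilon$, and shrinking $\epsilon$ only makes $K$ larger. You cannot treat the empty components as negligible, and Lemma~\ref{lemma:too big is everything} does not apply at the point you invoke it.

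The paper handles empty components structurally rather than by a mass bound: when the algorithm reaches an empty $U_i$, any $w\in U_i\setminus R(S)$ still has $\indeg_{G'}(w)\ge(\delta-d-3\epsilon)n>0$. Since $G'(U_i)$ has no internal edges and the topological ordering forbids edges from $U_j$ with $j>i$, all of these in-edges come from $U_1\cup\cdots\cup U_{i-1}\subseteq R(S)$. Two of them must form a pair, i.e.\ both endpoints of a single matching edge of $\cM_w$ lie in $R(S)$, which forces $w\in R(S)$, a contradiction. So $U_i\subseteq R(S)$ automatically, at zero cost to $|S|$. The same in-degree argument is then used to absorb $V\setminus V'$ at the very end. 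Supplying this argument is exactly what is needed to make your invariant $R(S)\supseteq T_1\cup\bigcup_{\ell'<\ell}U_{\ell'}$ actually hold across empty components; without it, the invariant is simply asserted.

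A secondary, fixable issue: in the growth step you insist that averaging produces a vertex $b\in U_\ell\setminus B$. The second endpoint $k$ of the matching edge behind a cut edge $(j,v)$ is only guaranteed to lie in $V\setminus R(S)$; it may well sit in $V\setminus V'$, or even in a later component $U_{\ell''}$ with $\ell''>\ell$ (the twin directed edge $(k,v)$ may have been deleted when passing from $G$ to $G'$, so the topological ordering of $G'$ does not constrain it). The remedy is simply to average over all of $V\setminus R(S)$, as the paper does; then the $3\epsilon n^2$ worry about edges into $V\setminus V'$ disappears, since those vertices are perfectly good candidates to add to $S$.
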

\begin{proof}
%$|T_2\setminus R(S)|\ge \frac{\tau}{3}n$
%\end{claim}
Let $\{\cM_v:v\in T_2\}$ be the matchings obtained from Lemma~\ref{lem:lcc_matching}, we know that $|\cM_v|\ge \frac{\tau}{4}n$ for each $v\in T_2$. We will construct a directed graph $G(V,E)$ where $V=[n]$ and $E$ is defined as follows. For every $v\in T_2\setminus R(S)$ and every edge $
\{i,j\}\in \cM_v$, add directed edges $(i,v),(j,v)$ to $E$. Thus there is a natural pairing among the directed edges of $G$, we will call $(j,v)$ the \textit{pairing edge} of $(i,v)$ and vice versa. $\{i,j\}$ is called the \textit{matching edge} corresponding to the pair $(i,v),(j,v)$. Since each matching $\cM_v$ has size $\ge \tau n/4$, we have $\indeg_G(v)\ge \delta n$ where $\delta:=\tau/2$ for every $v\in T_2\setminus R(S)=V\setminus R(S)$.

 We now apply Lemma~\ref{lem:expander_decomposition_directed} to get a subgraph $G'=(V',E')$ as described in the lemma where we will choose $\eps=\delta/100$ and $d=\delta/10$. Let $V'=U_1\cup \cdots \cup U_K$ be the partition of $G'$ as described in the lemma where $K\le M(\delta)$. Let $V_0=[n]\setminus V'$ be the remaining vertices, we have $|V_0|\le 3\eps n$. Each vertex $v\in V'\cap (T_2\setminus R(S))$ has $\indeg_{G'}(v)\ge (\delta-d-3\eps)n$. We also know that each sub-graph $G'(U_i)$ is either empty or is an $\alpha$-edge expander for some constant $\alpha(\eps)>0$.

Note that $S$ already has $O_\tau(\log n)$ vertices. We will now grow the set $S$ of coordinates queried by $\cP$ iteratively, adding one at a time. Algorithm~\ref{alg:growing_S} gives the procedure for growing the set $S$.
%\begin{enumerate}
%\item  To start the process, pick one vertex from each of the parts $V_i$ and add it to $S$. 
%\item  Repeat until $V'\subset R(S)$:
%\begin{itemize}
%\item  pick a vertex in $V'\setminus R(S)$ which has the maximum number of neighbors in $R(S)$ (break ties arbitrarily) and add it to $S$ (and recalculate $R(S)$).
%\end{itemize}
%\end{enumerate}
\begin{algorithm}
\caption{Algorithm for growing $S$}
\label{alg:growing_S}
\begin{algorithmic}
\FOR {$i=1$ \TO $K$}
\STATE  \textbf{Intialization:} Pick one vertex from $U_i$ and add it to $S$. 
\WHILE {$U_i\nsubseteq R(S)$} 
\STATE{Pick any $v\in V\setminus R(S)$ such that adding it to $S$ will add the maximum number of vertices in $U_i\setminus R(S)$ to $R(S)$.}
\ENDWHILE
\ENDFOR
\end{algorithmic}
\end{algorithm}

We will finish the analysis in a series of claims. Let us start with a simple claim about properties of $R(S)$.

%\begin{claim}\label{claim:R(S)_closure}
% If $i,j\in R(S)$ and $(i,j)\in \cM_k$ then $k\in R(S)$.
%\end{claim}
%\begin{proof}
%We can recover $c_i,c_j$ from $c|_S$ and then use them to recover $c_k$ since by Lemma~\ref{lem:lcc_matching}, there exists an algorithm $\cR^k_{i,j}$ such that for every $c\in \cC$, $\cR^k_{i,j}(c_i,c_j)=c_k$.
%\end{proof}

\begin{claim}\label{claim:R(S)_properties}
$R(S)$ has the following properties:
\begin{enumerate}
\item If $i,j\in R(S)$ and $(i,j)\in \cM_k$ then $k\in R(S)$.
\item For every edge $(i,k)\in E(R(S),V\setminus R(S))$, there is a unique $j\in V\setminus R(S)$ such that $(i,j)\in \cM_k$.
\end{enumerate}
\end{claim}
\begin{proof}
(1) We can recover $c_i,c_j$ from $c|_S$ and then use them to recover $c_k$ since by Lemma~\ref{lem:lcc_matching}, there exists an algorithm $\cR^k_{i,j}$ such that for every $c\in \cC$, $\cR^k_{i,j}(c_i,c_j)=c_k$.\\
(2) Let $(j,k)$ be the pairing edge of $(i,k)$ so that $(i,j)\in \cM_k$. Now $j$ cannot be in $R(S)$ because of (1).
\end{proof}

Algorithm~\ref{alg:growing_S} should terminate, since $|U_i\cap R(S)|$ increases by at least one in every iteration of the while loop. At the end of the procedure we clearly have $V'=U_1\cup \cdots \cup U_K \subset R(S)$. In fact, we can claim that at the end of the procedure $R(S)=V$ i.e. we can recover all the coordinates of $c$ from $c|_S$.
\begin{claim}
After Algorithm~\ref{alg:growing_S} terminates, $R(S)=V=[n]$.
\end{claim}
\begin{proof}
 After Algorithm~\ref{alg:growing_S} terminates, we have $V'\subset R(S)$. Now we are left with $V_0=V\setminus V'$ where we know that $|V_0|\le 3\eps n$. Now if $w\in V_0\setminus R(S)$ then $w\in T_2\setminus R(S)$ since $T_1\subset R(S)$. Therefore $\indeg_G(w)\ge \delta n$. So there must be $\delta n - |V_0| \ge (\delta -3\eps)n$ incoming edges from $V'$ to $w$. So two of these incoming edges must from a pair and so we have $w\in R(S)$ by part (1) of Claim \ref{claim:R(S)_properties}. Therefore $V_0 \subset R(S)$ as well.
\end{proof}

\begin{claim}
Algorithm~\ref{alg:growing_S} terminates after $O_\delta(\log n)$ rounds.
\end{claim}
\begin{proof}
We just need to show that the while loop runs for $O_\delta(\log n)$ rounds for each $i\in [K]$ since the outer for loop runs for $K$ times where $K\le M(\delta)$. There are two cases:\\
\textbf{Case 1:} The subgraph $G'(U_i)$ is empty.\\
In this case, we will show that $U_i$ must already be contained in $R(S)$. Suppose not, let $w\in U_i\setminus R(S)$, we have $\indeg_{G'}(w)\ge (\delta - d- 3\eps)n$. Moreover, all of these incoming edges come from $U_1,\cdots, U_{i-1}$ (note that this means $i>1$ for this case to happen). Therefore there must be two incoming edges from $U_1\cup \cdots \cup U_{i-1}$ which form a pair i.e. there exists $u,v \in U_1\cup \cdots \cup U_{i-1}$ such that $(u,v)\in \cM_w$. So by part (1) of Claim \ref{claim:R(S)_properties}, $w\in R(S)$. This is a contradiction.\\
\textbf{Case 2:} The subgraph $G'(U_i)$ is an $\alpha$-edge expander.\\
 If $U_i \nsubseteq R(S)$, we will show that after the end of the iteration $t_i:=|R(S)\cap U_i|$ increases by a factor of $(1+\epsilon \alpha)$. This will prove the required claim because $t_i$ is upper bounded by $n$. 

We first claim that $|U_i\setminus R(S)|\ge \eps n$. Suppose this is not true i.e. $|U_i\setminus R(S)|\le \eps n$.  Let $w\in U_i \setminus R(S)$. We know that $w$ has $\indeg_{G'}(w) \ge (\delta - d - 3\eps)n$ incoming edges in $G'$. Since no edges come from $U_j$ for $j>i$, at least $(\delta - d - 3\eps)n - |U_i\setminus R(S)| \ge (\delta -d -4\eps)n$ of them come from $U_1 \cup \cdots \cup  U_{i-1} \cup (U_i \cap R(S)) \subset R(S)$. Therefore two of the incoming edges must form a pair and so $w\in R(S)$ which is a contradiction.

Since $G'(U_i)$ is an $\alpha$-edge expander, we have $$E(U_i\cap R(S), U_i \setminus R(S))\ge \alpha t_i |U_i \setminus R(S)| \ge \alpha \epsilon t_i n.$$ By part (2) of Claim~\ref{claim:R(S)_properties}, each edge from $U_i\cap R(S)$ to $U_i \setminus R(S)$ corresponds to a matching edge between $U_i \cap R(S)$ and $V\setminus R(S)$ and it belongs to a matching which corresponds to a vertex in $U_i \setminus R(S)$. Therefore there are at least $\alpha \epsilon t_i n$ matching edges between $U_i \cap R(S)$ and $V\setminus R(S)$ which belong to $\cup_{w\in U_i\setminus R(S)}\cM_w$; by averaging there exists $v\in V\setminus R(S)$ which is incident to $\alpha \epsilon t_i n /|V\setminus R(S)|\ge \alpha \epsilon t_i$ of these matching edges. So adding this $v$ to $S$ will add $\alpha \epsilon t_i$ new vertices of $U_i\setminus R(S)$ to $R(S)$, increasing $t_i$ by a factor of $(1+\alpha \epsilon)$.
\end{proof}
\end{proof}

\end{document}